\documentclass[journal,onecolumn,letterpaper,12pt]{IEEEtran}

\usepackage[T1]{fontenc}
\usepackage[bookmarks]{hyperref}
\hypersetup{colorlinks=true,citecolor=blue,linkcolor=blue,filecolor=blue,urlcolor=blue}
\usepackage{amsmath,amssymb,amsthm}
\usepackage{dsfont}
\usepackage{fullpage}
\usepackage{multirow}
\usepackage{caption}

\usepackage{makecell}
\usepackage{siunitx}

\usepackage{amsthm}
\usepackage{mathtools}

\usepackage{float} 

\usepackage{graphicx}
\usepackage[section]{placeins} 

\usepackage{enumerate}

\usepackage{booktabs}

\allowdisplaybreaks[4]

\numberwithin{equation}{section}

\usepackage{aliascnt}

\newtheorem{theorem}{Theorem}[section]

\newaliascnt{lemma}{theorem}
\newtheorem{lemma}[lemma]{Lemma}
\aliascntresetthe{lemma}

\newaliascnt{proposition}{theorem}
\newtheorem{proposition}[proposition]{Proposition}
\aliascntresetthe{proposition}

\newaliascnt{corollary}{theorem}
\newtheorem{corollary}[corollary]{Corollary}
\aliascntresetthe{corollary}

\newaliascnt{conjecture}{theorem}

\aliascntresetthe{conjecture}

\theoremstyle{definition}

\newaliascnt{definition}{theorem}
\newtheorem{definition}[definition]{Definition}
\aliascntresetthe{definition}

\newtheorem*{remark}{Remark}

\usepackage{cleveref}

\crefname{theorem}{theorem}{theorems}
\Crefname{theorem}{Theorem}{Theorems}

\crefname{lemma}{lemma}{lemmas}
\Crefname{lemma}{Lemma}{Lemmas}

\crefname{proposition}{proposition}{propositions}
\Crefname{proposition}{Proposition}{Propositions}

\crefname{corollary}{corollary}{corollaries}
\Crefname{corollary}{Corollary}{Corollaries}

\crefname{conjecture}{conjecture}{conjectures}
\Crefname{conjecture}{Conjecture}{Conjectures}

\crefname{definition}{definition}{definitions}
\Crefname{definition}{Definition}{Definitions}

\crefformat{enumi}{(#2#1#3)}
\crefmultiformat{enumi}{(#2#1#3)}{ and~(#2#1#3)}{, (#2#1#3)}{ and~(#2#1#3)}
\crefrangeformat{enumi}{(#3#1#4)-(#5#2#6)}
\usepackage{tikz}
\usepackage{pgfplots}
 \pgfplotsset{compat=1.18}
\usepackage{xcolor}
\usepackage{comment}

\definecolor{dgreen}{HTML}{006600}
\definecolor{lgreen}{HTML}{B3FFB3}

\usepackage{caption}
\usepackage{subcaption}
\usepackage{setspace}
\usepackage{bbm}

\newcommand{\cA}{\mathcal{A}}
\newcommand{\cB}{\mathcal{B}}

\newcommand{\cD}{\mathcal{D}}

\newcommand{\cH}{\mathcal{H}}

\newcommand{\cL}{\mathcal{L}}
\newcommand{\cM}{\mathcal{M}}
\newcommand{\cN}{\mathcal{N}}

\newcommand{\cQ}{\mathcal{Q}}

\newcommand{\cT}{\mathcal{T}}

\newcommand{\cY}{\mathcal{Y}}

\newcommand{\bC}{\mathbb{C}}

\newcommand{\eps}{\varepsilon}

\DeclareMathOperator{\tr}{tr}

\DeclareMathOperator{\id}{id}

\DeclareMathOperator{\Ad}{Ad}

\newcommand{\epsstar}{\varepsilon_*}

\usepackage{ytableau}
\usepackage{physics}
\usepackage{bm}
\newcommand{\be}{\begin{equation}}
\newcommand{\ee}{\end{equation}}

\newcommand{\proj}[1]{|#1\rangle\langle #1|}

\usepackage[backend=biber,sorting=none,style=numeric-comp,giveninits=true,doi=false,isbn=false,url=false,maxbibnames=20,maxcitenames=2]{biblatex}
\renewbibmacro{in:}{}
\newbibmacro{string+doi}[1]{\iffieldundef{doi}{#1}{\href{https://dx.doi.org/\thefield{doi}}{#1}}}
\DeclareFieldFormat{title}{\usebibmacro{string+doi}{\mkbibemph{#1}}}
\DeclareFieldFormat[article]{title}{\usebibmacro{string+doi}{\mkbibquote{#1}}}
\DeclareFieldFormat[incollection]{title}{\usebibmacro{string+doi}{\mkbibquote{#1}}}                   
\DeclareFieldFormat[inproceedings]{title}{\usebibmacro{string+doi}{\mkbibquote{#1}}}

\newcommand{\C}{\mathbb{C}}

\newcommand{\bea}{\begin{eqnarray}}
\newcommand{\eea}{\end{eqnarray}}
\newcommand{\beas}{\begin{eqnarray*}}
\newcommand{\eeas}{\end{eqnarray*}}

\usepackage{bm}
\numberwithin{equation}{section}

\begin{document}
\title{High quantum local differential privacy \\ breaks entanglement}

 \author{
\IEEEauthorblockN{Sujeet Bhalerao\IEEEauthorrefmark{1},
Theshani Nuradha\IEEEauthorrefmark{1}\IEEEauthorrefmark{2},
Felix Leditzky\IEEEauthorrefmark{1}\IEEEauthorrefmark{2}
}

\IEEEauthorblockA{\IEEEauthorrefmark{1}
\small Department of Mathematics, University of Illinois Urbana-Champaign, Urbana, IL 61801, USA
}

\IEEEauthorblockA{\IEEEauthorrefmark{2}
\small Illinois Quantum Information Science and Technology (IQUIST) Center,\\
University of Illinois Urbana-Champaign, Urbana, IL 61801, USA
}
}

\maketitle
\date{\today}
\begin{abstract}
Differential privacy provides a mathematical framework for making use of sensitive data while providing a privacy guarantee for what can be learned about the input data. In quantum information processing, the interaction of privacy constraints with quantum resources such as entanglement remains a question of interest. Given that the utility of many protocols, and often the presence of a quantum advantage, relies on quantum resources such as entanglement, it is crucial to understand when a privacy requirement for a quantum channel is compatible with the channel's ability to preserve entanglement. We study this question for quantum local differential privacy (QLDP). Our main result shows that every $\varepsilon$-QLDP channel with a $d$-dimensional input is entanglement-breaking whenever $\varepsilon\leq\log\frac{d}{d-1}$. We also prove an approximate version for $(\varepsilon,\delta)$-QLDP, where channels in the same high-privacy regime are close in diamond norm to an entanglement-breaking channel. We further prove a composition result for a collection of private quantum channels having entangled inputs and global measurements in the high-privacy regime. Finally, we apply our results to private quantum learning theory. We prove that any learning protocol using arbitrary quantum memory on copies of the output of an entanglement-breaking channel can be simulated by a protocol that measures the corresponding unprocessed input copies one at a time while storing only classical information. Combining this result with our high-privacy entanglement-breaking theorem, we show that under sufficiently private local noise, a learning protocol with quantum memory for purity testing and bipartite product testing is subject to the sample complexity lower bounds for protocols with single-copy measurements on the noiseless tasks. We also obtain stronger sample complexity lower bounds when a single highly private channel acts on the entire multipartite input.

\end{abstract}

\tableofcontents

\section{Introduction}

As quantum processors and quantum communication networks are implemented at increasing scales, the protection of sensitive quantum data has become an emerging problem of interest. A user may send quantum information encoded as a state to a remote processor, or several users may contribute quantum systems to a distributed protocol. In either case, one would like to use the data while limiting what can be learned about any sensitive information encoded in the underlying input state. Differential privacy provides a mathematical framework to formalize such a requirement \cite{DMNS06,DR14}.

Quantum differential privacy replaces the classical randomized mechanism by a quantum channel. Early formulations studied privacy in quantum computation and its relation to gentle measurement \cite{QDP_computation17,aaronson2019gentle}. Information-theoretic formulations were developed in \cite{hirche2023quantum}, while the local model has been studied further in \cite{angrisani2023unifying,angrisani_localModel25}. More flexible notions of quantum privacy have also been considered, including quantum Pufferfish privacy \cite{nuradha_QPP,measuredHS25}. Recent works have also studied the composition of multiple quantum differentially private mechanisms \cite{QDP_computation17,hirche2023quantum,nuradha_QPP,guan2024optimal,alabi2026quantum}. Optimal mechanisms in the local model have been investigated in \cite{yoshida2026optimal,guan2024optimal, yoshida-hayashi}, and an information-ordering approach was developed in \cite{dasgupta2025quantum}.

In this paper, we consider quantum local differential privacy (QLDP). For an $\varepsilon$-QLDP channel (defined in \Cref{def:QLDP} below), the probability of any measurement outcome on one input state can differ from the corresponding probability on another input state by at most a factor of $e^\varepsilon$. When the privacy parameter $\varepsilon$ is small, the outputs of the channel must therefore be difficult to distinguish. Previous work has quantified this loss of distinguishability using contraction bounds and private hypothesis testing \cite{nuradha2024contraction,Christoph2024sample,Farhad_QP_HT,namPUT_HT_2025quantum,,nuradha2025nonLinearSDPI}. Recently, we focused on the tradeoff between privacy and utility instead, asking how well a channel can preserve a chosen utility while satisfying the same privacy constraint \cite{nuradha2026privacyutilitytradeoffsQI} (see also~\cite[Section~6.3]{gallage2025theory}). Entangled inputs under local mechanisms and local measurements were studied in \cite{wang2026entanglement}.

Choosing the privacy parameter $\varepsilon$ is an important part of using differential privacy. 
In the classical setting, this has been studied in the context of how to interpret $\varepsilon$ and how its numerical value should enter practical decisions about privacy \cite{ExplainEpsilon23,nanayakkara2024consider}. A key distinguishing feature in quantum information processing that does not exist in classical information processing is the presence of quantum entanglement. It is therefore natural to ask how the choice of the privacy parameter that gives strong protection of the input interacts with the presence of quantum resources such as entanglement. Furthermore, quantum information processing tasks often make use of entanglement or another quantum resource, such as a quantum memory, to prove a quantum advantage in some learning-theoretic task. 
This leads to the question how such protocols are affected in the presence of a privacy constraint in the form of a QLDP quantum channel. We study to what extent channels satisfying QLDP can preserve or break entanglement and how this behaviour depends on the choice of the privacy parameter.

\subsection{Main results}
\noindent\textbf{High privacy breaks entanglement.} Our main result gives a threshold for the privacy parameter $\varepsilon$ below which every $\varepsilon$-QLDP channel is entanglement-breaking. We prove that any $\varepsilon$-QLDP quantum channel $\cN\colon \cL(A)\to \cL(B)$ with $d=\dim A$ is entanglement-breaking whenever 
$$
\varepsilon\leq\log\frac{d}{d-1}.
$$
For a qubit input, this gives the threshold $\varepsilon\leq\log 2$ for the privacy parameter. The constant is optimal in the following sense: for every $\eta>0$, there is a channel which is not entanglement-breaking and whose optimal QLDP parameter is less than
$$\log\frac{d}{d-1}+\eta. $$
This gives one possible interpretation of the privacy parameter in the quantum differential privacy setting: if a protocol needs to preserve entanglement with a reference system, then choosing $\varepsilon$ below this threshold rules out that possibility.

The same argument also applies to the approximate privacy setting. If $\cN$ satisfies $(\varepsilon,\delta)$-QLDP and $\varepsilon\leq\log(d/(d-1))$, then there is an entanglement-breaking channel $\cM$ such that
$$
\frac12\|\cN-\cM\|_\diamond
\leq
(d-1)\delta.
$$

The threshold also depends on the dimension of the reference system with which entanglement is considered, in the following sense. Suppose that one only asks whether the channel must break entanglement with a reference system of dimension $r$. Then every $\varepsilon$-QLDP channel is $r$-entanglement-breaking when
$$\varepsilon\leq\log\frac{r}{r-1}.$$
In particular, $\varepsilon\leq\log 2$ also implies that the channel breaks entanglement with a qubit reference system even if the input dimension is larger than two. 

\medskip
\noindent\textbf{Composition with entangled inputs and global measurements with high privacy.}
We study tensor products of highly private channels when the joint input
may be entangled across the local systems and the output may be
measured globally. Let
$\cN_i\colon \cL(A_i)\to\cL(B_i)$ be
$\varepsilon_i$-QLDP, let $d_i=\dim A_i$, and suppose
$$\varepsilon_i<\log\frac{d_i}{d_i-1}.
$$
Define
$$\beta_i = \frac{
d_i\bigl(d_i-(d_i-1)e^{\varepsilon_i}\bigr)
}{1+(d_i-1)e^{\varepsilon_i}
}, \qquad
\gamma_i = \frac{
d_i\bigl(d_i-(d_i-1)e^{-\varepsilon_i}\bigr)
}{1+(d_i-1)e^{-\varepsilon_i}}.$$
We prove that $\bigotimes_i\cN_i$ is
$\varepsilon_{\mathrm{comp}}$-QLDP with respect to arbitrary pairs
of joint input states, where
$$
\varepsilon_{\mathrm{comp}}
=
\sum_i\log\frac{\gamma_i}{\beta_i}.
$$ Recent work established composition guarantees for tensor-product
channels acting on product neighboring inputs
\cite{alabi2026quantum}. In contrast, our composition theorem
allows arbitrary entangled pairs of joint input states and arbitrary
global output measurements, under the additional assumption that
each local channel lies in the high-privacy regime. For $n$ identical qubit-input channels with privacy parameter $\varepsilon<\log 2$, the composed privacy parameter is $$\varepsilon_{\mathrm{comp}}=
n\log\left(\frac{2e^\varepsilon-1}{2-e^\varepsilon}\right),$$
which is linear in $n$ and grows as $3n\varepsilon+O(n\varepsilon^3)$ for small $\varepsilon$. 

\medskip
\noindent\textbf{Applications to private quantum learning theory.}
We first prove a general simulation result for entanglement-breaking noise that may be of independent interest: any binary-output protocol that uses arbitrary quantum memory on copies of the output of an entanglement-breaking channel can be simulated by a protocol that measures the corresponding unprocessed input copies one at a time storing only classical information. Since every $\varepsilon_i$-QLDP channel on a $d_i$-dimensional input is entanglement-breaking when $\varepsilon_i\leq \log\frac{d_i}{d_i-1},$ the known single-copy lower bounds for the corresponding noiseless learning tasks also apply to quantum memory learners acting on the privatized states. For the purity-testing task of~\cite{CCHL}, this gives $T=\Omega(2^{n/2})$ when each of the $n$ input qubits is subjected to local highly private noise. For bipartite product testing on $n$ systems of local dimension $q$, the lower bound of~\cite{beckey2025product} similarly gives $T=\Omega(q^{n/4})$. We also consider the setting in which a single highly private channel acts on the entire input. Directly from the privacy condition, we obtain the stronger lower bounds $T=\Omega(4^n)$ for purity testing and $T=\Omega(q^{2n})$ for bipartite product testing. While prior work \cite{Cotler2026noisy} showed that local depolarizing noise can eliminate the two-copy advantage for purity testing, our result applies to all entanglement-breaking channels and to any protocol with quantum memory.

Additionally, in Appendix \ref{app: geometric-privacy} we connect QLDP channels to a result about the separable ball around the maximally mixed state due to Gurvits and Barnum \cite{gurvits2002largest}.

\subsection{Structure of the paper}
In \Cref{sec: bckdgrnd} we fix notation and recall the definition of quantum local differential privacy. In \Cref{sec:main-result} we prove
the entanglement-breaking privacy threshold and its approximate version, and also
discuss its optimality and bounded reference systems. In \Cref{sec: 4-composition} we derive the
composition bounds with entangled inputs and global measurements. 
In \Cref{sec: 5-learning}  we apply our entanglement-breaking result to purity
testing and bipartite product testing with quantum memory. \Cref{sec: 6-conclusion} concludes with a discussion of our results. In Appendix \ref{app: geometric-privacy} we provide a geometric interpretation of quantum local differential privacy, in particular to the ball of separable states around a separable Choi state.

\section{Background}\label{sec: bckdgrnd}
\subsection{Notation}
All Hilbert spaces we consider are finite-dimensional over $\bC$. For a system $A$, we denote by $\cD(A)$ the set of quantum states on $A$, which are positive semidefinite operators with unit trace. A state is called pure if it has rank one; otherwise we say it is mixed. Let $\cL(A)$ denote the linear operators on $A$. A quantum channel $\cN\colon\cL(A)\to\cL(B)$ is a completely positive trace-preserving linear map from $\cL(A)$ to $\cL(B)$. We write $\pi_A=I_A/d_A$ for the maximally mixed state on $A$, where $d_A=\dim A$ and $I_A$ is the identity operator on $A$.
We also denote by $\id_A$ the identity map on $\cL(A)$.
The trace norm of an operator $X$ is defined as $\left\|X\right\|_1 \coloneqq \Tr[\sqrt{X^\dagger X} ]$. 
The diamond norm of a linear map $\Phi\colon \cL(A)\to\cL(B)$ is defined as
\begin{align}
    \|\Phi\|_\diamond = \sup_{X_{RA}} \frac{\|(\id_R\otimes \Phi)(X_{RA})\|_1}{\|X_{RA}\|_1},
\end{align}
where the supremum is taken over all reference systems $R$ with $\dim R\leq \dim A$ and linear operators $X_{RA}$.
For operators $X$ and $Y$, we write $X \geq Y$ to indicate that $X-Y$ is a positive semi-definite (PSD) operator, while $X > Y$ indicates that $X-Y$ is positive definite. A positive operator-valued measure (POVM) is a collection of PSD operators $\{M_y\}_{y \in \cY}$ satisfying $\sum_{y \in \cY} M_y= I_{R}$, where $\cY$ is some finite set.

\subsection{Quantum local differential privacy}
 We recall the definition of quantum local differential privacy (QLDP) below~\cite{hirche2023quantum,nuradha_QPP}. 
\begin{definition}\label{def:QLDP}
    Fix $\varepsilon \geq 0$ and $\delta \in [0,1]$. 
    Let $\cA$ be a quantum channel. We say $\cA$ is \emph{$(\varepsilon, \delta)$-quantum local differentially private}, or $(\varepsilon,\delta)$-QLDP, if 
\begin{equation} \Tr\left[M \cA(\rho)\right] \leq e^\varepsilon \Tr\left[M \cA(\sigma)\right] + \delta
\label{eq:QLDP-def}
\end{equation}
for all pairs of states $\rho, \sigma$ and measurement operators $M$ satisfying $0 \leq M \leq I.$
We say that $\cA$ satisfies $\varepsilon$-QLDP if it satisfies $(\varepsilon,0)$-QLDP.
\end{definition}

It is useful to define the optimal privacy parameter of a channel:
$$
\epsstar(\cN)= \inf\{\varepsilon\ge0:\cN \text{ is } \varepsilon\text{-QLDP}\}.
$$
Thus $\cN$ is $\varepsilon$-QLDP exactly when $\epsstar(\cN)\le\varepsilon$. A channel $\cN\colon \cL(A)\to \cL(B)$ is \emph{entanglement-breaking} if, for any reference system $R$, the channel $\id_R \otimes \cN$ maps every state on $R\otimes A$ to a separable state across $R:B$. Every entanglement-breaking channel has a measure-and-prepare representation, that is, it can be written as \cite{horodecki2003entanglementbreaking}
$$\mathcal N(\rho)
=
\sum_{y\in\mathcal Y}\operatorname{Tr}[E_y\rho]\,\sigma_y,$$ for a POVM $\{E_y\}_{y \in Y}.$

The following elementary consequence of the definition of QLDP will be useful for computing the exact privacy parameters of several examples.

\begin{lemma}
\label{lem:spectral-epsstar}
Let $\cN\colon \cL(A)\to\cL(B)$ be a channel. Suppose there are constants $0<m\le u$ such that $mI_B\le\cN(\rho)\le uI_B$ for every state $\rho$. Suppose also that there are states $\rho_0,\sigma_0$ and a measurement $P$ such that $\Tr[P\cN(\rho_0)]=u$ and $\Tr[P\cN(\sigma_0)]=m$. Then $\epsstar(\cN)=\log(u/m)$.
\end{lemma}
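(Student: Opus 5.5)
The plan is to establish the two inequalities $\epsstar(\cN)\le\log(u/m)$ and $\epsstar(\cN)\ge\log(u/m)$ separately, both straight from \Cref{def:QLDP} with $\delta=0$.

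\textbf{Upper bound.} Fix arbitrary states $\rho,\sigma$ and a measurement operator $0\le M\le I_B$. The spectral hypothesis $\cN(\rho)\le uI_B$ gives $\Tr[M\cN(\rho)]\le u\Tr[M]$, because $M\ge0$. Similarly, $\cN(\sigma)\ge mI_B$ gives $\Tr[M\cN(\sigma)]\ge m\Tr[M]$. If $\Tr[M]=0$ then $M=0$ and \eqref{eq:QLDP-def} holds trivially. Otherwise, since $m>0$, we can chain the two bounds:
\begin{equation*}
\Tr[M\cN(\rho)]\le u\Tr[M]=\frac{u}{m}\,m\Tr[M]\le \frac{u}{m}\Tr[M\cN(\sigma)].
\end{equation*}
Hence $\cN$ is $\log(u/m)$-QLDP, so $\epsstar(\cN)\le\log(u/m)$. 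Because $m\le u$, we also have $\log(u/m)\ge 0$, so this value is admissible in the definition of $\epsstar$.

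\textbf{Lower bound.} Suppose $\cN$ is $\varepsilon$-QLDP. Apply \eqref{eq:QLDP-def} with $\rho_0$, $\sigma_0$ and $P$. This gives $u=\Tr[P\cN(\rho_0)]\le e^{\varepsilon}\Tr[P\cN(\sigma_0)]=e^{\varepsilon}m$, so $\varepsilon\ge\log(u/m)$, using $m>0$. Taking the infimum over all admissible $\varepsilon$ yields $\epsstar(\cN)\ge\log(u/m)$.

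\textbf{Expected obstacle.} There is essentially no hard step. The only point needing care is in the upper bound: the operator inequalities must be converted into scalar inequalities by tracing against the positive operator $M$. Positivity of $m$ is exactly what allows dividing by $m$ and excludes the degenerate case in which $\Tr[P\cN(\sigma_0)]=0$.
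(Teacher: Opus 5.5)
Your proposal is correct and follows essentially the same argument as the paper: trace the spectral bounds against $M$ and chain them to get $\log(u/m)$-QLDP, then use $\rho_0,\sigma_0,P$ to rule out any smaller parameter. Your separate treatment of $\Tr[M]=0$ is harmless but unnecessary, since the chain $u\Tr[M]=\frac{u}{m}\,m\Tr[M]\le\frac{u}{m}\Tr[M\cN(\sigma)]$ holds in that case as well.
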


\begin{proof}
By the assumption on $\cN$, we have for every pair of states $\rho,\sigma$ and every measurement
$0\leq M\leq I_B$ that
$$\Tr[M\cN(\rho)]\leq u\Tr[M] \qquad \text{and} \qquad
m\Tr[M]\leq\Tr[M\cN(\sigma)].
$$
Thus,
$$\Tr[M\cN(\rho)]\leq\frac{u}{m}\Tr[M\cN(\sigma)],
$$
so $\cN$ is $\log(u/m)$-QLDP. Conversely, the states
$\rho_0,\sigma_0$ and the measurement $P$ give the likelihood ratio
$$
\frac{\Tr[P\cN(\rho_0)]}{\Tr[P\cN(\sigma_0)]}=\frac{u}{m}.
$$
No smaller privacy parameter is therefore possible.
\end{proof}

\section{High local privacy breaks entanglement}\label{sec:main-result}

In this section, we prove our main results on quantum channels in the high privacy regime of QLDP being necessarily entanglement-breaking. First, we briefly outline the proof idea. We use the fact that, if $d=\dim A$, $\psi=|\psi\rangle\langle\psi|$, and $\mathbf{d}\psi$ is the Haar measure on pure states, then every operator $X\in \cL(A)$ satisfies
$$X = d\int \Tr[\psi X]\bigl((d+1)\psi-I_A\bigr)\,\mathbf{d}\psi .$$
For a state $\rho$, the coefficient $d\,\Tr[\psi\rho]\,\mathbf{d}\psi$ is a probability measure. The proof of the main theorem shows that, under the high-privacy condition, applying the channel to each operator $(d+1)\psi-I_A$ gives a density operator, and thus we obtain a measure-and-prepare form for the channel. This is equivalent to the channel being entanglement-breaking by \cite{horodecki2003entanglementbreaking}.

\begin{theorem}\label{thm: main}
Let $\cN\colon \cL(\cH_A)\to \cL(\cH_B)$ be a quantum channel. Suppose that $\cN$ is $\varepsilon$-QLDP. If $$\varepsilon\le \log (d_A/(d_A-1)),$$ then $\cN$ is entanglement-breaking. Consequently, if $\cN$ is not entanglement-breaking, then $\varepsilon > \log (d_A/(d_A-1))$.
\end{theorem}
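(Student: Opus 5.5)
We follow the outline given just before the statement. Write $d=d_A$ and let $\mathbf{d}\psi$ be the unitarily invariant (Haar) probability measure on pure states of $A$.

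\textbf{Step 1: the reconstruction identity.} We first verify that every $X\in\cL(A)$ satisfies
$$X=d\int\Tr[\psi X]\bigl((d+1)\psi-I_A\bigr)\,\mathbf{d}\psi .$$
This follows from the second-moment formula
$$\int\psi\otimes\psi\,\mathbf{d}\psi=\frac{I+F}{d(d+1)},$$
where $F$ is the swap. Applying it gives
$$\int\Tr[\psi X]\,\psi\,\mathbf{d}\psi=\frac{X+\Tr[X]\,I}{d(d+1)},$$
and also $\int\Tr[\psi X]\,\mathbf{d}\psi=\Tr[X]/d$. Combining these two identities yields the claim. For a state $\rho$, the function $d\Tr[\psi\rho]$ is a probability density with respect to $\mathbf{d}\psi$. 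We may therefore set $E(\psi)=d\,\psi\,\mathbf{d}\psi$; this is a (continuous) POVM, since $d\int\psi\,\mathbf{d}\psi=I_A$.

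\textbf{Step 2: linearity gives a measure-and-prepare form.} By linearity and continuity of $\cN$,
$$\cN(\rho)=\int\Tr[E(\psi)\rho]\,\tau_\psi,\qquad \tau_\psi:=(d+1)\cN(\psi)-\cN(I_A).$$
Each $\tau_\psi$ has trace $(d+1)-d=1$. It remains to show $\tau_\psi\ge0$.

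\textbf{Step 3: positivity from privacy (the crux).} Complete $\psi$ to an orthonormal basis $\psi=\psi_1,\psi_2,\dots,\psi_d$. Then
$$\tau_\psi=d\,\cN(\psi)-\sum_{j\ge2}\cN(\psi_j).$$
Take any $0\le M\le I$. By $\varepsilon$-QLDP, $\Tr[M\cN(\psi_j)]\le e^{\varepsilon}\Tr[M\cN(\psi)]$ for each $j\ge 2$. Hence
$$\Tr[M\tau_\psi]\ge\bigl(d-(d-1)e^{\varepsilon}\bigr)\Tr[M\cN(\psi)]\ge0,$$
where the last inequality uses exactly the hypothesis $e^\varepsilon\le d/(d-1)$. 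Since this holds for all rank-one projectors $M$, we conclude $\tau_\psi\ge0$, so $\tau_\psi$ is a state. This is the step where the threshold enters; the basis-completion trick is what makes the constant $d-1$ appear, instead of the cruder bound one would get from comparing with $I_A$ directly.

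\textbf{Step 4: from a continuous ensemble to entanglement-breaking.} With $\tau_\psi\ge0$ in hand, for any reference system $R$ and state $\omega_{RA}$ we have
$$(\id_R\otimes\cN)(\omega_{RA})=\int d\,\Tr_A[(I_R\otimes\psi)\omega_{RA}]\otimes\tau_\psi\,\mathbf{d}\psi .$$
This is an integral of positive product operators, so it lies in the closed convex cone of separable operators. The separable set is closed in finite dimensions, which handles the continuity issue. Alternatively, we can invoke Carathéodory, or replace the Haar integral by a finite complex projective 2-design, which makes the POVM finite and directly matches the finite measure-and-prepare form of \cite{horodecki2003entanglementbreaking}. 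Either way $\cN$ is entanglement-breaking, and the contrapositive gives the final sentence of the theorem. The only technical care needed beyond Step 3 is this passage from the continuous decomposition to separability, which we expect to be routine.
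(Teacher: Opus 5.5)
Your proof is correct and follows essentially the same route as the paper: the same Haar reconstruction with $R_\psi=(d+1)\psi-I_A$, the same trace-one operators $\tau_\psi=\cN(R_\psi)$, and positivity of $\tau_\psi$ from QLDP when $e^\varepsilon\le d/(d-1)$. The differences are minor. You get positivity by writing $R_\psi=d\psi-\sum_{j\ge2}\psi_j$ and applying QLDP to the pairs $(\psi_j,\psi)$, whereas the paper bounds $\Tr[A]$ through the eigenvalues of $A=\cN^\dagger(M)$. You also spell out the passage from the continuous ensemble to entanglement-breaking, via closedness of the separable cone or a finite 2-design, which the paper leaves implicit.
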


\begin{proof}
Let $d\coloneqq d_A$.
    First recall that 
    \begin{equation}
      d\int_{|\psi \rangle \in \mathbb{S}^{d}} \langle \psi| \rho |\psi\rangle \   | \psi\rangle\langle \psi| \ \mathbf{d}\psi = \frac{I_{A} +\rho}{d+1},
      \label{eq:integral}
    \end{equation}
    which can be seen by rewriting the integrand as $\langle \psi| \rho |\psi\rangle \   | \psi\rangle\langle \psi| = \Tr_2\left[(I_A\otimes \rho)|\psi\rangle\langle\psi|^{\otimes 2}\right]$ and using $\int_{|\psi \rangle \in \mathbb{S}^{d}} |\psi\rangle\langle\psi|^{\otimes 2} \ \mathbf{d}\psi = \frac{1}{d(d+1)}(I_{AB} + F_{AB})$, where $F_{AB}$ denotes the swap operator.
    
With $\psi \coloneqq | \psi\rangle\langle \psi|$, the equation \eqref{eq:integral} can be rewritten as
\begin{align}\label{eq:single-expansion}
    \rho ={}&  \int_{\psi} d\ \Tr[\psi \rho] \left( (d+1) \psi -I_{A} \right) \ \mathbf{d}\psi \\
    R_{\psi} \coloneqq{}& (d+1) \psi -I_{A} \\
    \cN(\rho) ={}&  \int_{\psi} d\ \Tr[\psi \rho] \ \cN(R_{\psi}) \ \mathbf{d}\psi.
\end{align}
With these, let us define $\tau_{\psi} \coloneqq \cN(R_{\psi}) $. Since $\cN$ is trace-preserving and $\Tr[R_{\psi}]= d+1-d=1$, we also have $\Tr[\tau_{\psi}]=1$.

Moreover, for $0\leq M \leq I_B$, we have that $0 \leq A \coloneqq \cN^\dag(M)  \leq I_{A}$ since $\cN^\dag$ is a unital completely positive map. We also have that $\cN$ satisfies $\varepsilon$-QLDP leading to 
$\Tr[ \cN^\dag(M) \rho] \leq e^\varepsilon \Tr[ \cN^\dag(M) \sigma]$ for all $\rho,\sigma$ states. By picking the states generated by the eigenvectors of the operator $A$, that is, $| \lambda_i\rangle\!\langle \lambda_i|$ with the corresponding eigenvalue $\lambda_i \geq 0$ for $i\in\{1,\ldots, d\}$, we get 
$\lambda_i \leq e^\varepsilon \lambda_j$.
Consider then 
\begin{align}
    \Tr[ M \cN(R_\psi)] &= \Tr[ \cN^\dag(M) R_{\psi}]\\
    & =\Tr[ AR_{\psi}] \\
    &= (d+1) \Tr[A \psi] -\Tr[A] \\
    &\geq (d+1) \lambda_{\min} -\left( \lambda_{\min}(A) +(d-1) e^\varepsilon \lambda_{\min}(A)\right) \\
    &= (d- (d-1) e^{\varepsilon}) \lambda_{\min}(A).
\end{align}
Thus, whenever $e^{\varepsilon} \leq d/(d-1) $, for all $M$, we have that 
$$ \Tr[M \tau_{\psi}] = \Tr[M  \cN(R_\psi)] \geq 0$$ along with $A\geq 0$. This results in $ \tau_{\psi}=\cN(R_{\psi}) \geq 0$. Together with $\Tr[\tau_\psi]=1$, this means that $\tau_{\psi}$ is a quantum state. 
Then, for all input states $\rho$,
\begin{equation}
    \cN(\rho)=  \int_{\psi} d\ \Tr[\psi \rho] \ \tau_{\psi} \ \mathbf{d}\psi,
\end{equation}
which represents the action of an entanglement-breaking channel.
\end{proof}

\begin{proposition}
    Let $\cN\colon \cL(\cH_A)\to \cL(\cH_B)$ be a quantum channel.
     Suppose that $\cN$ is $(\varepsilon,\delta)$-QLDP. If $$\varepsilon\le \log (d_{A}/(d_{A}-1)),$$ then there exists an entanglement-breaking channel $\cM$ such that $  \frac{1}{2}  \| \cN-\cM\|_\diamond \leq \delta (d_{A}-1). $
\end{proposition}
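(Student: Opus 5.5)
The plan is to rerun the proof of \Cref{thm: main} with a $\delta$ error term and then repair the possibly non-positive operators $\tau_\psi$. As before, write $R_\psi=(d+1)\psi-I_A$ and $\tau_\psi\coloneqq\cN(R_\psi)$ with $d=d_A$. Each $\tau_\psi$ is Hermitian with $\Tr[\tau_\psi]=1$, and $\cN(\rho)=\int d\,\Tr[\psi\rho]\,\tau_\psi\,\mathbf{d}\psi$.

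\textbf{Step 1: bound the negative part of $\tau_\psi$.} Fix $0\le M\le I_B$ and set $A=\cN^\dagger(M)$, so that $0\le A\le I_A$. Applying the $(\varepsilon,\delta)$-QLDP condition to eigenvectors of $A$ gives $\lambda_i\le e^{\varepsilon}\lambda_{\min}(A)+\delta$ for every eigenvalue $\lambda_i$. Hence $\Tr[A]\le \lambda_{\min}+(d-1)(e^\varepsilon\lambda_{\min}+\delta)$. Using $\Tr[A\psi]\ge\lambda_{\min}$ exactly as in the proof of \Cref{thm: main}, we get
\begin{equation*}
\Tr[M\tau_\psi]=(d+1)\Tr[A\psi]-\Tr[A]\ \ge\ \bigl(d-(d-1)e^{\varepsilon}\bigr)\lambda_{\min}(A)-(d-1)\delta\ \ge\ -(d-1)\delta,
\end{equation*}
where the last inequality uses $e^\varepsilon\le d/(d-1)$. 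Choose $M$ to be the projector onto the negative eigenspace of $\tau_\psi$ and write $\tau_\psi=\tau_\psi^+-\tau_\psi^-$ (Jordan decomposition). With $n_\psi\coloneqq\Tr[\tau_\psi^-]$, this gives $n_\psi\le(d-1)\delta$.

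\textbf{Step 2: define the entanglement-breaking channel.} Set $\sigma_\psi\coloneqq\tau_\psi^+/(1+n_\psi)$. This is a state, because $\Tr[\tau_\psi^+]=1+n_\psi$. Its distance to $\tau_\psi$ satisfies
\begin{equation*}
\|\tau_\psi-\sigma_\psi\|_1\le\|\tau_\psi^-\|_1+\Bigl\|\tau_\psi^+-\tfrac{\tau_\psi^+}{1+n_\psi}\Bigr\|_1=n_\psi+n_\psi\le 2(d-1)\delta .
\end{equation*}
Define $\cM(\rho)\coloneqq\int d\,\Tr[\psi\rho]\,\sigma_\psi\,\mathbf{d}\psi$. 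Since $\psi\mapsto\sigma_\psi$ is continuous, $\cM$ is well defined. It is trace-preserving and positive, and it has measure-and-prepare form with POVM density $d\,\psi\,\mathbf{d}\psi$, which integrates to $I_A$. So $\cM$ is an entanglement-breaking channel by \cite{horodecki2003entanglementbreaking}. If needed, one can discretize the integral by a finite design approximation or a limiting argument; the continuous form should suffice.

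\textbf{Step 3: diamond norm estimate.} This is the step needing the most care. By convexity, it suffices to bound $\|(\id_R\otimes(\cN-\cM))(\omega_{RA})\|_1$ for states $\omega_{RA}$. The identity $X=d\int\Tr[\psi X]R_\psi\,\mathbf{d}\psi$ is linear and holds for all $X\in\cL(A)$. Applying it on the $A$ factor gives
\begin{equation*}
\omega_{RA}=\int d\,\omega_R^\psi\otimes R_\psi\,\mathbf{d}\psi,\qquad \omega_R^\psi\coloneqq\Tr_A[(I_R\otimes\psi)\omega_{RA}]\ge0 .
\end{equation*}
Moreover, $\int d\,\Tr[\omega_R^\psi]\,\mathbf{d}\psi=\Tr[\omega_{RA}]=1$, because $\int d\,\psi\,\mathbf{d}\psi=I_A$. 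Therefore
\begin{equation*}
\|(\id_R\otimes(\cN-\cM))(\omega_{RA})\|_1\le\int d\,\Tr[\omega_R^\psi]\,\|\tau_\psi-\sigma_\psi\|_1\,\mathbf{d}\psi\le 2(d-1)\delta,
\end{equation*}
and hence $\tfrac12\|\cN-\cM\|_\diamond\le(d-1)\delta$.

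The main obstacle is Step 1, namely getting the negative part of $\tau_\psi$ to scale as $(d-1)\delta$ rather than something larger. The key point is that the high-privacy assumption kills the $\lambda_{\min}$ term exactly, leaving only the additive $\delta$ contributions from the $d-1$ non-minimal eigenvalues.
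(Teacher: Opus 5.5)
Your proposal is correct and matches the paper's proof step for step: the same eigenvalue bound gives $\Tr[(\tau_\psi)_-]\le(d-1)\delta$, you use the same renormalized positive part $\sigma_\psi$ with $\|\tau_\psi-\sigma_\psi\|_1\le 2(d-1)\delta$ and the same measure-and-prepare channel $\cM$, and the diamond-norm estimate goes through the same positive operators $\omega_R^\psi=\Tr_A[(I_R\otimes\psi)\omega_{RA}]$. One point of exposition in Step 3: the identity you need is $(\id_R\otimes(\cN-\cM))(\omega_{RA})=\int d\,\omega_R^\psi\otimes(\tau_\psi-\sigma_\psi)\,\mathbf{d}\psi$, and its $\cM$-term must come directly from the defining formula of $\cM$, as in the paper, not from pushing your $R_\psi$-expansion of $\omega_{RA}$ through $\cM$, because in general $\cM(R_\psi)\neq\sigma_\psi$.
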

\begin{proof} Let $d\coloneqq d_A$.
Defining $A=\cN^\dagger(M)$ and denoting its eigenvalues by $\lambda_i$ as in the proof of \Cref{thm: main}, we have $\lambda_i \leq e^\varepsilon \lambda_j + \delta$, since the channel $\cN$ satisfies $(\varepsilon,\delta)$-QLDP and we can choose $\rho$ and $\sigma$ to be the pure states corresponding to the respective eigenvectors.
    
    Then, consider 
    \begin{align}
    \Tr[ M \tau_{\psi}] &= \Tr[ \cN^\dag(M) R_{\psi}]\\
    &= \Tr[ AR_{\psi}] \\
    &= (d+1) \Tr[A \psi] -\Tr[A] \\
    &\geq (d+1) \lambda_{\min} -( \lambda_{\min} +(d-1) (e^\varepsilon \lambda_{\min} + \delta)) \\
    &= (d- (d-1) e^{\varepsilon}) \lambda_{\min} - (d-1) \delta \\
    & \geq - (d-1) \delta,
\end{align}
where the last inequality follows from $\varepsilon\le \log (d/(d-1))$.
Note that 
$\Tr[(\tau_{\psi})_{-}] \leq (d-1) \delta$ since we can choose $M$ to be the projection onto the negative eigenspace of $\tau_{\psi}$.

Let $\eta_\psi\coloneqq \Tr[(\tau_\psi)_-]$. Since $\Tr[\tau_\psi]=1$, also $\Tr[(\tau_\psi)_+]=1+\eta_\psi$. Define 
$$\sigma_\psi\coloneqq \frac{1}{\Tr[(\tau_\psi)_+]}(\tau_\psi)_+=\frac{1}{(1+\eta_\psi)}(\tau_\psi)_+.$$ 
Then $\sigma_\psi$ is a state. Moreover, since $(\tau_\psi)_+$ and $(\tau_\psi)_-$ have orthogonal supports, $\|\tau_\psi-\sigma_\psi\|_1=2\eta_\psi\le 2(d-1)\delta$.

Now define the channel $\cM$ by $\cM(X)=\int d\,\Tr[\psi X]\sigma_\psi\,\mathbf{d}\psi$, where $\mathbf{d}\psi$ is normalized Haar measure on pure states and $d$ is the input dimension. This is a measure-and-prepare channel, so it is entanglement-breaking.

It remains to bound $\|\cN-\cM\|_\diamond$. Let $R$ be an arbitrary reference system, and let $\omega_{RA}$ be a state. We have $(\id_R\otimes(\cN-\cM))(\omega_{RA})=\int d\,\Tr_A[(I_R\otimes\psi)\omega_{RA}]\otimes(\tau_\psi-\sigma_\psi)\,\mathbf{d}\psi$. By the triangle inequality and multiplicativity of the trace norm,
$$\|(\id_R\otimes(\cN-\cM))(\omega_{RA})\|_1\le \int d\,\|\Tr_A[(I_R\otimes\psi)\omega_{RA}]\|_1\|\tau_\psi-\sigma_\psi\|_1\,\mathbf{d}\psi.$$

For each $\psi$, the operator $\Tr_A[(I_R\otimes\psi)\omega_{RA}]$ is positive, so its trace norm equals its trace. Therefore
$$\|(\id_R\otimes(\cN-\cM))(\omega_{RA})\|_1\le 2(d-1)\delta\int d\,\Tr[\Tr_A[(I_R\otimes\psi)\omega_{RA}]]\,\mathbf{d}\psi.$$
The integral is equal to $$\Tr[\left(I_R\otimes d\int \psi\,\mathbf{d}\psi\right)\omega_{RA}]=\Tr[\omega_{RA}]=1,$$ since $d\int\psi\,\mathbf{d}\psi=I_A$. Thus $$\|(\id_R\otimes(\cN-\cM))(\omega_{RA})\|_1\le 2(d-1)\delta.$$

Taking the supremum over all reference systems $R$ and all states $\omega_{RA}$ gives $\|\cN-\cM\|_\diamond\le 2(d-1)\delta$. Equivalently, $\frac12\|\cN-\cM\|_\diamond\le(d-1)\delta$, as claimed.
\end{proof}

Next, we show that the threshold $\log(d/(d-1))$ for the privacy parameter stated in the theorems above is optimal in the following sense: for any privacy parameter $\eps$ strictly larger than this value, one can always find a channel that is $\eps$-QLDP and \emph{not} entanglement-breaking.

\begin{theorem}
\label{thm:sharp}
For every $d\ge2$, the constant $\log(d/(d-1))$ in \Cref{thm: main} is optimal: for every $\eta>0$, there is a channel that is not entanglement-breaking and is $\varepsilon$-QLDP for some $\varepsilon<\log(d/(d-1))+\eta$.
\end{theorem}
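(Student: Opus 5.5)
The plan is to exhibit an explicit one-parameter family of channels whose Choi states are Werner states, compute their optimal privacy parameter with \Cref{lem:spectral-epsstar}, and show that the entanglement-breaking threshold of the family sits exactly at privacy parameter $\log(d/(d-1))$. Depolarizing channels will not do for $d>2$: at their entanglement-breaking threshold the likelihood ratio equals $2$. I would instead use the Werner--Holevo channel
$$\cW(X)=\frac{\Tr[X]\,I_A-X^{T}}{d-1},$$
mixed with the completely depolarizing channel:
$$\cN_s(X)=(1-s)\,\cW(X)+s\,\Tr[X]\,\frac{I_A}{d},\qquad s\in(0,1].$$

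First I check that $\cN_s$ is a channel. The map $\cW$ is trace-preserving because $\Tr[\Tr[X] I - X^T]=(d-1)\Tr[X]$. It is completely positive because its Choi operator is proportional to $I-F$, where $F$ is the swap (the partial transpose of the unnormalized maximally entangled operator). This is $2P_a\ge0$, with $P_a=(I-F)/2$ the antisymmetric projector.

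Second, I compute the privacy parameter. For a pure input $\psi$, the output is
$$\cN_s(\psi)=(1-s)\frac{I-\bar\psi}{d-1}+s\frac{I}{d}.$$
Its eigenvalue on $\bar\psi$ is $m=s/d$, and its eigenvalue on the orthogonal complement is $u=(1-s)/(d-1)+s/d$. By convexity, every output state $\cN_s(\rho)$ satisfies $mI\le\cN_s(\rho)\le uI$. Both bounds are attained with $P=\bar\phi$ for a unit vector $\phi$, taking $\sigma_0=\phi$ and $\rho_0$ any pure state orthogonal to $\phi$. \Cref{lem:spectral-epsstar} then gives
$$\epsstar(\cN_s)=\log\Bigl(1+\frac{d(1-s)}{(d-1)s}\Bigr).$$
This is continuous and decreasing in $s$, and it equals $\log(d/(d-1))$ exactly at $s=d/(d+1)$.

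Third, I determine when $\cN_s$ is entanglement-breaking. A channel is entanglement-breaking iff its normalized Choi state is separable. Here the Choi state is
$$\omega_s=(1-s)\frac{P_a}{\Tr P_a}+s\frac{I}{d^2},$$
which is a Werner state. Its weight on the antisymmetric subspace is
$$\Tr[P_a\omega_s]=1-s+s\,\frac{d-1}{2d}=1-s\,\frac{d+1}{2d}.$$
By the standard Werner-state criterion, $\omega_s$ is entangled iff this weight exceeds $1/2$, that is, iff $s<d/(d+1)$. To be self-contained, I would note that the criterion follows from $\Tr[F\omega]<0$ for a state $\omega$ that is entangled. This holds because $\Tr[F(\alpha\otimes\beta)]=\Tr[\alpha\beta]\ge0$ on product states, and $\Tr[F\omega_s]=1-2\Tr[P_a\omega_s]$.

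Finally, given $\eta>0$, continuity lets me choose $s$ slightly less than $d/(d+1)$ with $\epsstar(\cN_s)<\log(d/(d-1))+\eta$, while $\cN_s$ is still not entanglement-breaking. Since $\epsstar(\cN_s)$ is attained by \Cref{lem:spectral-epsstar}, $\cN_s$ is $\varepsilon$-QLDP with $\varepsilon=\epsstar(\cN_s)$. The main delicacy is the step asserting that $\cN_s$ is not entanglement-breaking. That step is needed in the direction "entangled for $s<d/(d+1)$", and this direction uses only the easy witness $\Tr[F\omega_s]<0$. So I expect no real obstacle beyond checking the eigenvalue bookkeeping.
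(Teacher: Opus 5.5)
Your proof is correct and is essentially the paper's argument: your family $\cN_s$ is exactly the paper's transpose-depolarizing channel $\cT_t$ under the reparametrization $t=-(1-s)/(d-1)$. Your threshold $s=d/(d+1)$ is the paper's Werner boundary $t_0=-1/(d^2-1)$, and both proofs compute $\epsstar$ via \Cref{lem:spectral-epsstar} and finish by continuity, the only difference being that you verify the needed entangled direction self-containedly with the swap witness $\Tr[F\omega_s]<0$ rather than citing Werner's separability criterion.
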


\begin{proof}
Consider the transpose-depolarizing channel $\cT_t(\rho)=t\rho^T+(1-t)\Tr(\rho)I/d$. Its normalized Choi state is $$J(\cT_t)=tF/d+(1-t)I_{d^2}/d^2,$$ where $F$ is the swap operator. It is known that $\cT_t$ is completely positive exactly when $-1/(d-1)\le t\le1/(d+1)$.

The state $J(\cT_t)$ is a Werner state, which is known to be separable precisely for $t\ge t_0$, where $t_0=-1/(d^2-1)$ \cite{werner1989EPR}.

Now we compute the exact privacy parameter for $t<0$, restricting to $t>-1/(d-1)$ so that all outputs are full rank. If $r_1,\ldots,r_d$ are the eigenvalues of $\rho$, then $0\le r_i\le1$ and the eigenvalues of $\cT_t(\rho)$ are $(1-t)/d+tr_i$. Since $t<0$, these eigenvalues lie between $m_t=(1+(d-1)t)/d$ and $u_t=(1-t)/d$. Thus $m_tI\le\cT_t(\rho)\le u_tI$ for every state $\rho$.

The ratio $u_t/m_t$ is actually attained: Choose two orthonormal basis vectors $x,y$, take $\rho_0=\proj{y}$, $\sigma_0=\proj{x}$, and take the measurement $P=\proj{x}$. Then $\Tr[P\cT_t(\rho_0)]=u_t$, because $x$ is orthogonal to the support of $\rho_0^T$, while $\Tr[P\cT_t(\sigma_0)]=m_t$, because $x$ lies in the support of $\sigma_0^T$. \Cref{lem:spectral-epsstar} therefore gives
$$
\epsstar(\cT_t)=\log\frac{u_t}{m_t}=\log\frac{1-t}{1+(d-1)t}.
$$
At the entanglement-breaking boundary $t_0=-1/(d^2-1)$, this ratio is $d/(d-1)$. Indeed, $u_{t_0}=d/(d^2-1)$ and $m_{t_0}=(d-1)/(d^2-1)$.

Finally, let $\eta>0$. Since $$t\mapsto \log((1-t)/(1+(d-1)t))$$ is continuous at $t_0$, we may choose $t<t_0$, still inside the complete-positivity interval, such that $$\epsstar(\cT_t)<\log(d/(d-1))+\eta.$$ Because $t<t_0$, the Choi state is a Werner state with $\Tr[FJ(\cT_t)]<0$, hence is entangled. Thus $\cT_t$ is not entanglement-breaking. This proves optimality.
\end{proof}

\begin{remark}
    Our proof of \Cref{thm: main} works in every dimension $d$, but there is a useful simplification for a channel with qubit input, which we now describe. 
    Let $u_1,\ldots,u_4\in \mathbb R^3$ be unit vectors satisfying
$$\sum_{i=1}^4 u_i=0,
\qquad
u_i\cdot u_j=-\frac13
\quad
(i\neq j).$$
Let $\sigma=(\sigma_x,\sigma_y,\sigma_z)$ be the Pauli matrices and define
$$ E_i=\frac14(I_2+u_i\cdot \sigma),
\qquad
R_i=\frac12(I_2+3u_i\cdot \sigma).$$
Then $\{E_i\}_{i=1}^4$ is a POVM and every qubit state $\rho$ satisfies
$$\rho=\sum_{i=1}^4 \Tr[E_i\rho]R_i.$$
If $\cN\colon \cL(\mathbb C^2)\to \cL(B)$ is $\varepsilon$-QLDP with
$\varepsilon\le \log 2,$
then $\cN(R_i)\ge 0$ for every $i$. Indeed, for $0\le M\le I_B$, write
$$
\cN^\dagger(M)=aI_2+b\cdot \sigma.
$$
The eigenvalue ratio forced by QLDP gives
$$a+|b|\le e^\varepsilon(a-|b|)\le 2(a-|b|),$$
hence $a\ge 3|b|$. Therefore
$$\Tr[M\cN(R_i)] =
\Tr[\cN^\dagger(M)R_i] =
a+3b\cdot u_i \ge a-3|b|
\ge 0.
$$
Thus $\cN(R_i)$ is a state, and
$$
\cN(\rho)= \sum_{i=1}^4 \Tr[E_i\rho]\cN(R_i)
$$
is a measure-and-prepare representation of $\cN$.
\end{remark}

\subsection{Privacy thresholds for entanglement with bounded reference systems}
\label{sec:bounded-reference-hierarchy}

The entanglement-breaking threshold in \Cref{thm: main} can be distinguished from the thresholds obtained when the reference system has bounded dimension. We use the notion of bounded-reference entanglement-breaking introduced in~\cite{christandl2018composed}, which we recall below.

\begin{definition}
\label{def:r-eb}
Let $\cN\colon \cL(A)\to\cL(B)$ be a channel and let $1\le r\le d_A$. We say that $\cN$ is \emph{$r$-entanglement-breaking}, or $r$-EB, if $(\id_r\otimes\cN)(\omega)$ is separable
for every state $\omega\in\cD(\mathbb C^r\otimes A)$.
\end{definition}

For an isometry $V\colon\mathbb C^r\to A$, write $\Ad_V(X)=VXV^\dagger$. We use the following characterization of $r$-entanglement-breaking channels from \cite{chen2019entanglementbreaking}.  Its rank-$r$ projection form appears in~\cite[Lemma~6.1]{chen2019entanglementbreaking}, and we use the equivalent form using an isometry. These are equivalent since every rank-$r$ projection $P$ is of the form $P = VV^\dagger$ for an isometry $V\colon \C^r \to A.$ 

\begin{lemma}[{\cite[Lemma~6.1]{chen2019entanglementbreaking}}]
\label{lem:r-eb-restrictions}
A channel $\cN\colon \cL(A)\to\cL(B)$ is $r$-EB if and only if $\cN\circ\Ad_V$ is entanglement-breaking for every isometry $V\colon \mathbb C^r\to A$.
\end{lemma}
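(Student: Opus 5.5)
The plan is to prove the two implications separately. Both directions rest on two standard facts. The first is Horodecki's criterion \cite{horodecki2003entanglementbreaking}: a channel with $r$-dimensional input is entanglement-breaking if and only if its Choi state is separable. The second is that every pure state on $\mathbb C^r\otimes A$ has Schmidt rank at most $r$.

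For ($\Rightarrow$), assume $\cN$ is $r$-EB and fix an isometry $V\colon\mathbb C^r\to A$. Let $\Phi_r$ be the maximally entangled state on $\mathbb C^r\otimes\mathbb C^r$. The Choi state of $\cN\circ\Ad_V$ is
\begin{equation*}
(\id_r\otimes\cN\circ\Ad_V)(\Phi_r)=(\id_r\otimes\cN)\bigl((I_r\otimes V)\Phi_r(I_r\otimes V^\dagger)\bigr).
\end{equation*}
Since $V$ is an isometry, $(I_r\otimes V)\Phi_r(I_r\otimes V^\dagger)$ is a state on $\mathbb C^r\otimes A$. By the $r$-EB assumption its image is separable. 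Hence the Choi state of $\cN\circ\Ad_V$ is separable, and Horodecki's criterion shows that $\cN\circ\Ad_V$ is entanglement-breaking.

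For ($\Leftarrow$), assume every $\cN\circ\Ad_V$ is entanglement-breaking, and let $\omega\in\cD(\mathbb C^r\otimes A)$. Write $\omega=\sum_k p_k\proj{\phi_k}$ with pure $\phi_k$. Since separable states form a convex set and $\id_r\otimes\cN$ is linear, it suffices to treat a single pure $\phi_k$.

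The reduced state of $\phi_k$ on $A$ has the same rank as its reduced state on $\mathbb C^r$, which is at most $r$. So its support lies in some subspace of $A$ of dimension at most $r$. Because $r\le d_A$, we can enlarge this to a subspace $S$ of dimension exactly $r$ and choose an isometry $V_k\colon\mathbb C^r\to A$ with range $S$. Then $\ket{\phi_k}=(I_r\otimes V_k)\ket{\phi_k'}$ for the unit vector $\ket{\phi_k'}=(I_r\otimes V_k^\dagger)\ket{\phi_k}\in\mathbb C^r\otimes\mathbb C^r$. It follows that
\begin{equation*}
(\id_r\otimes\cN)(\proj{\phi_k})=(\id_r\otimes\cN\circ\Ad_{V_k})(\proj{\phi_k'}),
\end{equation*}
which is separable because $\cN\circ\Ad_{V_k}$ is entanglement-breaking.

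The only delicate step is the support argument in ($\Leftarrow$). It needs three things: the Schmidt-rank bound, padding the support to dimension exactly $r$ (which uses $r\le d_A$), and checking that $V_kV_k^\dagger$ acts as the identity on the support of $\phi_k$, so that the factorization above is exact. The rest is routine.
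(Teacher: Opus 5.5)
Your proof is correct. The paper does not prove this lemma itself. It imports it from \cite[Lemma~6.1]{chen2019entanglementbreaking}, where it is stated in terms of rank-$r$ projections $P$ rather than isometries. It then only remarks that the two forms are equivalent because every such $P$ equals $VV^\dagger$ for an isometry $V\colon\mathbb C^r\to A$.

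You instead give a self-contained argument for the isometry form. In the forward direction you apply the $r$-EB hypothesis to the single input $(I_r\otimes V)\Phi_r(I_r\otimes V^\dagger)$ and conclude with the Choi-state criterion of \cite{horodecki2003entanglementbreaking}. In the reverse direction you reduce by convexity to pure inputs. Their $A$-marginals have rank at most $r$, so each can be written as $(I_r\otimes V_k)$ applied to a unit vector in $\mathbb C^r\otimes\mathbb C^r$, for an isometry $V_k$ whose range is an $r$-dimensional subspace $S$ containing that support. The projection/isometry translation the paper invokes appears in your proof simply as $V_kV_k^\dagger=P_S$ acting trivially on the pure input.

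The cited route is shorter. Yours removes the dependence on the external reference and makes explicit where $r\le d_A$ (padding the support to dimension exactly $r$) and the Schmidt-rank bound enter.
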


\begin{theorem}
\label{thm:r-eb-hierarchy}
Let $\cN\colon \cL(A)\to\cL(B)$ be an $\varepsilon$-QLDP channel and let $2\le r\le d_A$. If
\begin{equation}
 \varepsilon\le\log\frac{r}{r-1},
\end{equation}
then $\cN$ is $r$-entanglement-breaking. 
\end{theorem}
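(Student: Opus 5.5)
The plan is to reduce to \Cref{thm: main} via the characterization in \Cref{lem:r-eb-restrictions}. Fix an arbitrary isometry $V\colon\mathbb C^r\to A$ and consider the channel $\cN_V\coloneqq\cN\circ\Ad_V\colon\cL(\mathbb C^r)\to\cL(B)$. First we check that $\cN_V$ is indeed a quantum channel. It is completely positive as a composition of CP maps. It is trace-preserving because $\Tr[VXV^\dagger]=\Tr[X]$ when $V^\dagger V=I_r$. Its input dimension is exactly $r$.

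The key step is that $\cN_V$ inherits the $\varepsilon$-QLDP property of $\cN$. For any states $\rho,\sigma\in\cD(\mathbb C^r)$, the operators $V\rho V^\dagger$ and $V\sigma V^\dagger$ are states on $A$. Hence for every $0\le M\le I_B$,
\begin{equation*}
\Tr[M\cN_V(\rho)]=\Tr[M\cN(V\rho V^\dagger)]\le e^{\varepsilon}\Tr[M\cN(V\sigma V^\dagger)]=e^{\varepsilon}\Tr[M\cN_V(\sigma)].
\end{equation*}
So $\cN_V$ is $\varepsilon$-QLDP, since QLDP only quantifies over a subset of the input states of the original channel.

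Next we apply \Cref{thm: main} to $\cN_V$, whose input dimension is $r$. The hypothesis $\varepsilon\le\log(r/(r-1))$ is exactly the threshold there, so $\cN_V$ is entanglement-breaking. Since $V$ was arbitrary, \Cref{lem:r-eb-restrictions} gives that $\cN$ is $r$-EB.

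There is no real obstacle. The only points needing care are that $r\ge2$, so that $\log(r/(r-1))$ is finite and \Cref{thm: main} applies with $d=r$, and that \Cref{lem:r-eb-restrictions} is invoked in the ``if'' direction. One could alternatively give a direct proof, running the Haar-measure argument of \Cref{thm: main} on the restricted operator $V^\dagger\cN^\dagger(M)V$, whose eigenvalues still satisfy $\lambda_i\le e^\varepsilon\lambda_j$. The reduction above is cleaner.
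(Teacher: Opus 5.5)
Your proposal is correct and follows the paper's proof essentially verbatim: restrict to $\cN\circ\Ad_V$ for an arbitrary isometry $V\colon\mathbb C^r\to A$, observe that it inherits $\varepsilon$-QLDP, apply \Cref{thm: main} with $d=r$, and conclude via \Cref{lem:r-eb-restrictions}. Your explicit checks that $\cN_V$ is a channel and that the QLDP inequality transfers make precise what the paper's proof states in one line.
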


\begin{proof}
Let $V\colon \mathbb C^r\to A$ be an isometry. The restricted channel $\cN_V=\cN\circ\Ad_V$ remains $\varepsilon$-QLDP, because its admissible input states form a subset of those for $\cN$. Its input dimension is $r$, so \Cref{thm: main} shows that $\cN_V$ is entanglement-breaking whenever $\varepsilon\le\log(r/(r-1))$. Since this holds for every $V$, by \Cref{lem:r-eb-restrictions} $\cN$ is $r$-EB.
\end{proof}

In particular, this implies that every $\eps$-QLDP channel with $\eps \leq \log(2)$ breaks entanglement with a qubit reference system, regardless of the dimension of the input to the channel.

\subsection{Privacy of entanglement-breaking channels}

In this subsection, we focus on the potential privacy parameters of entanglement-breaking channels. In fact, we show that for any $\varepsilon$ there is an entanglement-breaking channel that is $\eps$-QLDP.

\begin{proposition}
 For every $\varepsilon_0\in[0,\infty]$, there is an entanglement-breaking channel $\cN_{\varepsilon_0}\colon\cL(A)\to \cL(B)$ with $\varepsilon^*(\cN_{\varepsilon_0})=\varepsilon_0$.
\end{proposition}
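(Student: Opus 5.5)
We construct the family explicitly as a measure-and-prepare channel, so that entanglement breaking holds automatically, and tune a single parameter to hit $\varepsilon_0$. Fix an orthonormal basis $\{|i\rangle\}_{i=1}^{d_A}$ of $A$ and a pair of orthonormal vectors $|0\rangle,|1\rangle$ in $B$ (we assume $d_B\ge 2$). For $p\in[1/2,1]$ define the qubit-valued states
$$\sigma_0=p\proj{0}+(1-p)\proj{1},\qquad \sigma_1=(1-p)\proj{0}+p\proj{1},$$
and the channel
$$\cN_p(\rho)=\langle 1|\rho|1\rangle\,\sigma_0+\bigl(1-\langle 1|\rho|1\rangle\bigr)\sigma_1 .$$
This corresponds to measuring the two-outcome POVM $\{\proj{1},I_A-\proj{1}\}$ and preparing $\sigma_0$ or $\sigma_1$. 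It is a measure-and-prepare channel, hence entanglement-breaking by \cite{horodecki2003entanglementbreaking}. If $d_B=1$, the only channel is the trace, which gives only $\varepsilon_0=0$; we therefore take $d_B\ge2$, which is implicit in the statement.

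Next we compute $\epsstar(\cN_p)$ for $1/2\le p<1$. Every output is a convex combination of $\sigma_0$ and $\sigma_1$, so it is diagonal in $\{|0\rangle,|1\rangle\}$ with eigenvalues in $[1-p,p]$ on that subspace. Its support is contained in $\mathrm{span}\{|0\rangle,|1\rangle\}$, so \Cref{lem:spectral-epsstar} cannot be applied on all of $B$. We avoid this by arguing directly.

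For $0\le M\le I$, let $M'$ be the compression of $M$ to the span, with diagonal entries $m_0,m_1\in[0,1]$. Then
$$\Tr[M\cN_p(\rho)]=q m_0+(1-q)m_1$$
for some $q\in[1-p,p]$. Whenever $m_0+m_1>0$, the ratio of two such expressions is at most $p/(1-p)$, because each expression lies in $[(1-p)(m_0+m_1),\,p(m_0+m_1)]$. This bound is attained by $\rho=\proj{1}$, $\sigma=\proj{2}$ and $M=\proj{0}$. Hence
$$\epsstar(\cN_p)=\log\frac{p}{1-p}.$$

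This function is continuous and increasing on $[1/2,1)$, runs from $0$ to $\infty$, and is onto $[0,\infty)$. For a given $\varepsilon_0\in[0,\infty)$ we choose $p=e^{\varepsilon_0}/(1+e^{\varepsilon_0})$.

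The only remaining case is $\varepsilon_0=\infty$, which we handle by taking $p=1$. Then $\rho=\proj{1}$, $\sigma=\proj{2}$ and $M=\proj{0}$ give $\Tr[M\cN_1(\rho)]=1$ but $\Tr[M\cN_1(\sigma)]=0$. No finite $\varepsilon$ satisfies \eqref{eq:QLDP-def}, so the infimum over an empty set is $\infty$, as required.

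The main point needing care is the rank-deficiency of the outputs, which is resolved by the direct compression argument above.
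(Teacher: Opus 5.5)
Your proposal is correct and is essentially the paper's proof. Both use a two-outcome measure-and-prepare channel whose outputs are diagonal in $\{|0\rangle,|1\rangle\}$, bound $\Tr[M\cN(\rho)]$ directly between $(1-p)(m_0+m_1)$ and $p(m_0+m_1)$, show attainment with $M=\proj{0}$ on inputs from the two measurement branches, and set $p=1$ for $\varepsilon_0=\infty$. The only differences are that you fix the rank-one projection $\proj{1}$ where the paper allows any nontrivial projection $P$, and that both arguments implicitly require $d_A,d_B\ge 2$.
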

\begin{proof}
First assume $0\leq \varepsilon_0<\infty$. Let $q=e^{\varepsilon_0}/(1+e^{\varepsilon_0})$. Then $1/2\leq q<1$ and $q/(1-q)=e^{\varepsilon_0}$. Define $$\tau_0=q|0\rangle\langle 0|+(1-q)|1\rangle\langle 1|$$ and $$\tau_1=(1-q)|0\rangle\langle 0|+q|1\rangle\langle 1|.$$

Let $P$ be some non-trivial projection with $1\leq \tr P\leq d_A-1$ on $A$ and define the channel $$\cN_{\varepsilon_0}(\rho)=\Tr[P\rho]\tau_0+\Tr[(I-P)\rho]\tau_1.$$ 
This is a measure-and-prepare channel, so it is entanglement-breaking. We want to show that $\cN_{\varepsilon_0}$ has $\varepsilon^*(\cN_{\varepsilon_0})= \varepsilon_0$. We will do so in two steps: first we show that $\varepsilon^*(\cN_{\varepsilon_0}) \leq \varepsilon_0$ and then we show the equality.

Fix states $\rho,\sigma\in D(A)$ and a measurement operator $0\leq M\leq I_B$. Let $p=\Tr[P\rho]$ and $s=\Tr[P\sigma]$. Also let $\alpha=\langle 0|M|0\rangle$ and $\beta=\langle 1|M|1\rangle$. Since $0\leq M\leq I_B$, we have $\alpha,\beta\geq 0$.

The outputs of $\cN_{\varepsilon_0}$ are diagonal in the computational basis. Thus $\Tr[M\cN_{\varepsilon_0}(\rho)]=x\alpha+(1-x)\beta$ with $x=pq+(1-p)(1-q)$. Similarly, $\Tr[M\cN_{\varepsilon_0}(\sigma)]=y\alpha+(1-y)\beta$ with $y=sq+(1-s)(1-q)$. Since $x,y\in[1-q,q]$, we get $x\alpha+(1-x)\beta\leq q(\alpha+\beta)$ and $y\alpha+(1-y)\beta\geq (1-q)(\alpha+\beta)$. Hence
$$\Tr[M\cN_{\varepsilon_0}(\rho)]\leq \frac{q}{1-q}\Tr[M\cN_{\varepsilon_0}(\sigma)]=e^{\varepsilon_0}\Tr[M\cN_{\varepsilon_0}(\sigma)].$$
Thus $\cN_{\varepsilon_0}$ is $\varepsilon_0$-QLDP, and thus $\varepsilon^*(\cN_{\varepsilon_0})\leq \varepsilon_0$.

It remains to show that we have $\varepsilon^*(\cN_{\varepsilon_0})=\varepsilon_0$. Since $P$ is a non-trivial projection, we may choose states $\rho_0,\sigma_0\in D(A)$ such that $\Tr[P\rho_0]=1$ and $\Tr[P\sigma_0]=0$. Then $\cN_{\varepsilon_0}(\rho_0)=\tau_0$ and $\cN_{\varepsilon_0}(\sigma_0)=\tau_1$. Taking $M=|0\rangle\langle 0|$ gives $\Tr[M\cN_{\varepsilon_0}(\rho_0)]/\Tr[M\cN_{\varepsilon_0}(\sigma_0)]=q/(1-q)=e^{\varepsilon_0}$. Therefore $\cN_{\varepsilon_0}$ is not $\varepsilon$-QLDP for any $\varepsilon<\varepsilon_0$, proving $\varepsilon^*(\cN_{\varepsilon_0})=\varepsilon_0$.

Finally, consider $\varepsilon_0=\infty$ and define $$\cN_{\infty}(\rho)=\Tr[P\rho]|0\rangle\langle 0|+\Tr[(I-P)\rho]|1\rangle\langle 1|.$$ This channel is again measure-and-prepare, hence entanglement-breaking. Choose $\rho_0,\sigma_0$ as above and take $M=|0\rangle\langle 0|$. Then $\Tr[M\cN_{\infty}(\rho_0)]=1$ while $\Tr[M\cN_{\infty}(\sigma_0)]=0$. Thus no finite $\varepsilon$ can satisfy the QLDP inequality. Therefore $\varepsilon^*(\cN_{\infty})=\infty$.

\end{proof}

\section{Composition in the high privacy regime}\label{sec: 4-composition}
In this section, we prove a composition theorem for the privacy of tensor products of quantum channels that
allows arbitrary entangled pairs of joint input states and arbitrary
global output measurements, under the additional assumption that
each local channel lies in the high-privacy regime.
Throughout this section, we use the following operator characterization of QLDP proved in \cite{yoshida-hayashi,hirche2023quantum}.

\begin{lemma}[{\cite{yoshida-hayashi,hirche2023quantum}}]
\label{lem: qldp-eq}
Let $\cN\colon \cL(A)\to \cL(B)$ be a quantum channel. Then $\cN$ is $\varepsilon$-QLDP if and only if $\cN(\rho)\le e^\varepsilon \cN(\sigma)$ for every pair of states $\rho,\sigma\in D(A)$. Thus, if $\pi_A=I_A/d_A$ and $X\ge0$, then $$e^{-\varepsilon}\Tr[X]\cN(\pi_A)\le \cN(X)\le e^\varepsilon\Tr[X]\cN(\pi_A).$$
\end{lemma}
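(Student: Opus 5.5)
We prove the two directions of the equivalence separately, using only that the cone of positive semidefinite operators is self-dual under the trace pairing. We then derive the displayed sandwich bound by normalizing $X$.

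First, suppose $\cN(\rho)\le e^\varepsilon\cN(\sigma)$ for all states $\rho,\sigma$. Fix any $0\le M\le I_B$. Since $e^\varepsilon\cN(\sigma)-\cN(\rho)\ge 0$ and $M\ge 0$, the trace of the product of two positive semidefinite operators is nonnegative, so
\[
\Tr\bigl[M\bigl(e^\varepsilon\cN(\sigma)-\cN(\rho)\bigr)\bigr]\ge 0 .
\]
This is exactly \eqref{eq:QLDP-def} with $\delta=0$. Hence $\cN$ is $\varepsilon$-QLDP. Note that only $M\ge 0$ was used here, not $M\le I_B$.

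Conversely, suppose $\cN$ is $\varepsilon$-QLDP and fix states $\rho,\sigma$. The operator $Y\coloneqq e^\varepsilon\cN(\sigma)-\cN(\rho)$ is Hermitian. For any unit vector $|v\rangle\in\cH_B$, take the measurement operator $M=|v\rangle\langle v|$, which satisfies $0\le M\le I_B$. The QLDP inequality then gives $\langle v|Y|v\rangle=\Tr[MY]\ge 0$. A Hermitian operator with nonnegative expectation in every vector is positive semidefinite, so $Y\ge0$, i.e.\ $\cN(\rho)\le e^\varepsilon\cN(\sigma)$. Alternatively, it suffices to take $v$ ranging over the eigenvectors of $Y$, which shows that every eigenvalue of $Y$ is nonnegative.

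For the consequence, let $X\ge 0$. If $\Tr[X]=0$ then $X=0$, and all three terms of the sandwich vanish. Otherwise $\rho\coloneqq X/\Tr[X]$ is a state. We apply the operator inequality twice:
\begin{itemize}
\item with the pair $(\rho,\pi_A)$, giving $\cN(\rho)\le e^\varepsilon\cN(\pi_A)$;
\item with the pair $(\pi_A,\rho)$, giving $\cN(\pi_A)\le e^\varepsilon\cN(\rho)$, i.e.\ $e^{-\varepsilon}\cN(\pi_A)\le\cN(\rho)$.
\end{itemize}
Multiplying both inequalities by $\Tr[X]>0$ and using linearity of $\cN$ gives
\[
e^{-\varepsilon}\Tr[X]\,\cN(\pi_A)\le\cN(X)\le e^{\varepsilon}\Tr[X]\,\cN(\pi_A).
\]
None of these steps is a real obstacle. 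The only point needing care is the converse direction: rank-one projectors are admissible measurement operators, so the QLDP condition forces the operator inequality and not merely an inequality between expectation values.
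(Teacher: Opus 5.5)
Your proof is correct and complete, and it also handles the degenerate case $\Tr[X]=0$. The paper gives no proof of this lemma and imports it from the cited works. There (e.g.\ \cite{hirche2023quantum}) it is the $\delta=0$ case of the hockey-stick characterization $\Tr[(\cN(\rho)-e^{\varepsilon}\cN(\sigma))_+]\le\delta$. Your argument, which tests the QLDP inequality against rank-one projectors and uses self-duality of the PSD cone, is the same duality made explicit. The sandwich bound then follows by normalizing $X$ and swapping the roles of $\rho$ and $\pi_A$, exactly as the word ``Thus'' in the statement intends.
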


Next we prove an operator inequality for the action of a channel on the operator $R_\psi=(d+1)\psi-I_A$ for any pure state $\psi.$

\begin{lemma}\label{lem:highprivacy}
Let $\cN\colon \cL(A)\to \cL(B)$ be $\varepsilon$-QLDP, with $d=\dim A$, and assume $e^\varepsilon\le d/(d-1)$. Let $\pi=I_A/d$. For each pure state $\psi=|\psi\rangle\langle\psi|$, define $R_\psi=(d+1)\psi-I_A$ and $\tau_\psi=\cN(R_\psi)$. Then $\beta_\varepsilon \cN(\pi)\le\tau_\psi\le\gamma_\varepsilon \cN(\pi)$, where
\begin{align}
\beta_\varepsilon &\coloneqq \frac{d(d-(d-1)e^\varepsilon)}{1+(d-1)e^\varepsilon},
&
\gamma_\varepsilon &\coloneqq \frac{d(d-(d-1)e^{-\varepsilon})}{1+(d-1)e^{-\varepsilon}}.
\label{eq:beta-gamma}
\end{align}
In particular, if $e^\varepsilon<d/(d-1)$, then $\beta_\varepsilon>0$.
\end{lemma}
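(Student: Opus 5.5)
The plan is to split $R_\psi$ into two \emph{states} and then use the operator form of QLDP from \Cref{lem: qldp-eq} to compare the channel's outputs on them.

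\textbf{Step 1: decompose $R_\psi$ and $\pi$.} Set $\omega_\psi\coloneqq (I_A-\psi)/(d-1)$, which is a state orthogonal to $\psi$. Then
\begin{equation*}
R_\psi = d\,\psi-(d-1)\,\omega_\psi, \qquad \pi = \tfrac1d\,\psi+\tfrac{d-1}{d}\,\omega_\psi .
\end{equation*}
Write $X\coloneqq\cN(\psi)$ and $Y\coloneqq\cN(\omega_\psi)$. By linearity,
\begin{equation*}
\tau_\psi=dX-(d-1)Y, \qquad \cN(\pi)=\tfrac1d\bigl(X+(d-1)Y\bigr).
\end{equation*}
Since $\psi$ and $\omega_\psi$ are states, \Cref{lem: qldp-eq} gives
\begin{equation*}
e^{-\varepsilon}Y\le X\le e^{\varepsilon}Y .
\end{equation*}
The claim is thereby reduced to two linear operator inequalities in the PSD pair $X,Y$.

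\textbf{Step 2: lower bound.} Put $b\coloneqq\beta_\varepsilon/d$. Then
\begin{equation*}
\tau_\psi-\beta_\varepsilon\cN(\pi)=(d-b)X-(d-1)(1+b)Y .
\end{equation*}
We have $b\le d$, because the numerator $d-(d-1)e^\varepsilon\le d$ while the denominator $1+(d-1)e^\varepsilon\ge 1$. So the coefficient of $X$ is nonnegative, and we may use $X\ge e^{-\varepsilon}Y$ to get
\begin{equation*}
\tau_\psi-\beta_\varepsilon\cN(\pi)\ge\bigl[(d-b)e^{-\varepsilon}-(d-1)(1+b)\bigr]Y .
\end{equation*}
The bracket vanishes exactly when
\begin{equation*}
b=\frac{de^{-\varepsilon}-(d-1)}{(d-1)+e^{-\varepsilon}}=\frac{d-(d-1)e^{\varepsilon}}{1+(d-1)e^{\varepsilon}} .
\end{equation*}
This is precisely $\beta_\varepsilon/d$, so the bracket is zero and $\tau_\psi\ge\beta_\varepsilon\cN(\pi)$.

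\textbf{Step 3: upper bound.} Put $g\coloneqq\gamma_\varepsilon/d$. Then
\begin{equation*}
\gamma_\varepsilon\cN(\pi)-\tau_\psi=(g-d)X+(d-1)(1+g)Y .
\end{equation*}
Here $g\le d$, since $\gamma_\varepsilon\le d^2$ by the same numerator/denominator argument. So the coefficient $g-d$ of $X$ is nonpositive, and we use $X\le e^{\varepsilon}Y$ to get
\begin{equation*}
\gamma_\varepsilon\cN(\pi)-\tau_\psi\ge\bigl[(g-d)e^\varepsilon+(d-1)(1+g)\bigr]Y .
\end{equation*}
The bracket vanishes exactly when
\begin{equation*}
g\bigl(e^\varepsilon+d-1\bigr)=de^\varepsilon-(d-1),
\end{equation*}
that is, when
\begin{equation*}
g=\frac{d-(d-1)e^{-\varepsilon}}{1+(d-1)e^{-\varepsilon}} .
\end{equation*}
This is $\gamma_\varepsilon/d$, so the bracket is zero and $\tau_\psi\le\gamma_\varepsilon\cN(\pi)$.

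\textbf{Step 4: positivity of $\beta_\varepsilon$.} If $e^\varepsilon<d/(d-1)$, then $d-(d-1)e^\varepsilon>0$, and hence $\beta_\varepsilon>0$.

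The only delicate point is the sign check on the coefficient of $X$ in Steps 2 and 3. That sign decides which of the two QLDP inequalities may be applied. With the choice of $\omega_\psi$ above, both signs are settled by the elementary bounds $\beta_\varepsilon\le d^2$ and $\gamma_\varepsilon\le d^2$. The remaining work is the algebra confirming that the constants in \eqref{eq:beta-gamma} are exactly the values that make the residual brackets vanish.
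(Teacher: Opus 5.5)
Your proof is correct, and it takes a different route from the paper's.

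The paper works in the Heisenberg picture. It fixes $M\ge 0$, sets $A=\cN^\dagger(M)$, and obtains the eigenvalue ratio $\lambda_{\max}\le e^\varepsilon\lambda_{\min}$ by applying QLDP to eigenvector states of $A$. It then bounds $\langle\psi|A|\psi\rangle$ and $\Tr[A]$ through the extreme eigenvalues, for example $\Tr[A]\le\lambda_{\min}(1+(d-1)e^\varepsilon)$. The resulting scalar inequalities hold for every $M\ge 0$, and the paper reads them as operator inequalities.

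You stay in the Schr\"odinger picture instead. You expand $R_\psi$ and $\pi$ over the single orthogonal pair $\psi$ and $\omega_\psi=(I_A-\psi)/(d-1)$. The operator form of QLDP in \Cref{lem: qldp-eq} then reduces the whole lemma to the constraint $e^{-\varepsilon}Y\le X\le e^{\varepsilon}Y$. Your sign checks ($\beta_\varepsilon,\gamma_\varepsilon\le d^2$) and the vanishing of both brackets are correct.

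What your route buys:
\begin{itemize}
\item It avoids the dual map and the spectral bookkeeping.
\item It shows that $\beta_\varepsilon$ and $\gamma_\varepsilon$ are exactly the constants forced by this one pairwise constraint. In the scalar form, $x=\Tr[MX]\ge e^{-\varepsilon}\Tr[MY]$ gives the same lower bound $x\ge \Tr[A]/(1+(d-1)e^\varepsilon)$ that the paper derives for $\lambda_{\min}$.
\item It shows that both operator inequalities hold for every $\varepsilon$. The hypothesis $e^\varepsilon\le d/(d-1)$ is needed only for the sign of $\beta_\varepsilon$.
\end{itemize}

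What the paper's route buys: it reuses the eigenvalue estimate from the proof of \Cref{thm: main} and its $(\varepsilon,\delta)$ variant, so one spectral argument drives all three results.

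One small point: your $\omega_\psi$ requires $d\ge 2$, which is already implicit in the hypothesis $e^\varepsilon\le d/(d-1)$.
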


\begin{proof}
Fix $M\ge0$ on $B$ and set $A=\cN^\dagger(M)$. Then $A\ge0$. Let $\lambda_{\min}$ and $\lambda_{\max}$ be the minimum and maximum eigenvalues of $A$. Applying the QLDP property to the eigenvector states of $A$ gives $\lambda_{\max}\le e^\varepsilon\lambda_{\min}$.

We have $\Tr[M\tau_\psi]=\Tr[AR_\psi]=(d+1)\langle\psi|A|\psi\rangle-\Tr[A]$, while $\Tr[M\cN(\pi)]=\Tr[A]/d$.

For the upper bound, use $\langle\psi|A|\psi\rangle\le\lambda_{\max}$ and $$\Tr[A]\ge\lambda_{\max}+(d-1)\lambda_{\min}\ge\lambda_{\max}(1+(d-1)e^{-\varepsilon}).$$ Hence $\lambda_{\max}\le\Tr[A]/(1+(d-1)e^{-\varepsilon})$, and therefore $\Tr[M\tau_\psi]\le\gamma_\varepsilon\Tr[M\cN(\pi)]$.

For the lower bound, use $\langle\psi|A|\psi\rangle\ge\lambda_{\min}$ and $$\Tr[A]\le\lambda_{\min}+(d-1)\lambda_{\max}\le\lambda_{\min}(1+(d-1)e^\varepsilon).$$ Hence $\lambda_{\min}\ge\Tr[A]/(1+(d-1)e^\varepsilon)$, and therefore $\Tr[M\tau_\psi]\ge\beta_\varepsilon\Tr[M\cN(\pi)]$.

Both inequalities hold for every positive semidefinite $M$, which is equivalent to the claimed operator inequalities.
\end{proof}

\Cref{lem: qldp-eq,lem:highprivacy} allow us to prove our claimed composition theorem.

\begin{theorem}
For $i=1,\ldots,n$, let $\cN_i\colon\cL(A_i)\to \cL(B_i)$ be $\varepsilon_i$-QLDP, and write $d_i=\dim A_i$. Assume $e^{\varepsilon_i}<d_i/(d_i-1)$ for every $i$. Let $\pi_i=I_{A_i}/d_i$, and let $\beta_i,\gamma_i$ be the constants defined via \eqref{eq:beta-gamma} in \Cref{lem:highprivacy} for the channel $\cN_i$.

Then, for every joint input state $\rho_{A_1\cdots A_n}$, possibly entangled across the systems $A_1,\ldots,A_n$,
$$
\left(\prod_{i=1}^n\beta_i\right)\bigotimes_{i=1}^n\cN_i(\pi_i) \le \left(\bigotimes_{i=1}^n\cN_i\right)(\rho) \le
\left(\prod_{i=1}^n\gamma_i\right)\bigotimes_{i=1}^n\cN_i(\pi_i).
$$
Therefore, $\bigotimes_{i=1}^n\cN_i$ is $\varepsilon_{\mathrm{comp}}$-QLDP as a channel on $A_1\otimes\cdots\otimes A_n$, with respect to arbitrary pairs of joint input states, where $\varepsilon_{\mathrm{comp}}=\sum_{i=1}^n\log(\gamma_i/\beta_i)$.
\end{theorem}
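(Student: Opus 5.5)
We apply the single-system expansion \eqref{eq:single-expansion} to every factor at once. The aim is to write an arbitrary joint input as a mixture of product operators $R_{\psi_1}\otimes\cdots\otimes R_{\psi_n}$, push the channel through, and then use \Cref{lem:highprivacy} factor by factor.

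\textbf{Step 1: joint expansion.} On each $A_i$, the map $X\mapsto d_i\int\Tr[\psi_iX]R_{\psi_i}\,\mathbf{d}\psi_i$ is the identity on $\cL(A_i)$. This holds for states by \eqref{eq:single-expansion}, and extends to all operators by linearity. Tensoring these identity maps gives, for every $\rho$ on $A_1\otimes\cdots\otimes A_n$,
\begin{equation*}
\rho=\int\Bigl(\prod_i d_i\Bigr)\Tr\bigl[(\psi_1\otimes\cdots\otimes\psi_n)\rho\bigr]\,R_{\psi_1}\otimes\cdots\otimes R_{\psi_n}\,\mathbf{d}\psi_1\cdots\mathbf{d}\psi_n .
\end{equation*}
Define $p(\psi_1,\dots,\psi_n)=\prod_i d_i\,\Tr[(\otimes_i\psi_i)\rho]$. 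It is nonnegative, and because $d_i\int\psi_i\,\mathbf{d}\psi_i=I_{A_i}$, its total integral is $\Tr\rho=1$. So $p$ is a probability density, even when $\rho$ is entangled. Applying the channel, we get $(\bigotimes_i\cN_i)(\rho)=\int p\,\bigotimes_i\tau^{(i)}_{\psi_i}$, where $\tau^{(i)}_{\psi_i}=\cN_i(R_{\psi_i})$.

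\textbf{Step 2: factorwise sandwich.} By \Cref{lem:highprivacy}, $0\le\beta_i\cN_i(\pi_i)\le\tau^{(i)}_{\psi_i}\le\gamma_i\cN_i(\pi_i)$, and $\beta_i>0$ because the inequality $e^{\varepsilon_i}<d_i/(d_i-1)$ is strict. We need the tensor-product fact that $0\le X_i\le Y_i$ for all $i$ implies $\bigotimes_iX_i\le\bigotimes_iY_i$. This follows by telescoping: write $Y_1\otimes Y_2-X_1\otimes X_2=(Y_1-X_1)\otimes Y_2+X_1\otimes(Y_2-X_2)$, note each term is a tensor product of PSD operators, and induct on $n$. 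Applying it gives
\begin{equation*}
\Bigl(\prod_i\beta_i\Bigr)\bigotimes_i\cN_i(\pi_i)\le\bigotimes_i\tau^{(i)}_{\psi_i}\le\Bigl(\prod_i\gamma_i\Bigr)\bigotimes_i\cN_i(\pi_i)
\end{equation*}
pointwise in $(\psi_1,\dots,\psi_n)$. Integrating against the probability density $p$ preserves these operator inequalities, which yields the claimed sandwich for $(\bigotimes_i\cN_i)(\rho)$.

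\textbf{Step 3: privacy.} For two joint states $\rho,\sigma$, set $\Omega=\bigotimes_i\cN_i(\pi_i)$. Step 2 gives $(\bigotimes_i\cN_i)(\rho)\le(\prod_i\gamma_i)\,\Omega$ and $(\prod_i\beta_i)\,\Omega\le(\bigotimes_i\cN_i)(\sigma)$. Combining them,
\begin{equation*}
\Bigl(\bigotimes_i\cN_i\Bigr)(\rho)\le\prod_i\frac{\gamma_i}{\beta_i}\,\Bigl(\bigotimes_i\cN_i\Bigr)(\sigma).
\end{equation*}
Tracing against any $0\le M\le I$ gives the QLDP inequality with parameter $\varepsilon_{\mathrm{comp}}$; alternatively one can invoke \Cref{lem: qldp-eq}.

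\textbf{Main obstacle.} The key point is Step 1: the expansion coefficients stay nonnegative even for entangled $\rho$. Although the operators $R_{\psi_i}$ are not positive, the coefficients are overlaps of $\rho$ with product pure states, and these are always nonnegative. Everything else relies on the positivity of the $\tau^{(i)}$ guaranteed by \Cref{lem:highprivacy}, and that lemma is exactly where the high-privacy assumption enters. Measurability and the integration of operator inequalities are routine in finite dimensions.
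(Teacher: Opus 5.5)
Your proposal is correct and follows the paper's proof essentially step for step: a joint expansion over product pure states with a nonnegative density, the factorwise sandwich from \Cref{lem:highprivacy}, the telescoping tensor-product inequality, integration, and then the ratio bound. The only difference is cosmetic. You get the joint expansion at once by tensoring the single-system identity maps and using linearity, whereas the paper derives the same formula by applying the single-system identity one subsystem at a time.
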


\begin{proof}
For each $i$, define $R_{\psi_i}^{(i)}=(d_i+1)\psi_i-I_{A_i}$. The single-system identity from \Cref{thm: main} says that every operator $X_i\in L(A_i)$ satisfies $X_i=\int d_i\Tr[\psi_iX_i]R_{\psi_i}^{(i)}\mathbf{d}\psi_i$.

First consider two systems. Let $X\in \cL(A_1\otimes A_2)$. Apply the single-system identity only on the first tensor factor. Equivalently, apply it to each $A_1$-block of $X$. This gives
$$
X=\int d_1 R_{\psi_1}^{(1)}\otimes \Tr_{A_1}[(\psi_1\otimes I_{A_2})X]\mathbf{d}\psi_1.
$$
For each fixed $\psi_1$, the operator $\Tr_{A_1}[(\psi_1\otimes I_{A_2})X]$ lies in $\cL(A_2)$, so we may apply the single-system identity to it. Thus
$$
\Tr_{A_1}[(\psi_1\otimes I_{A_2})X] =
\int d_2\Tr[\psi_2\Tr_{A_1}[(\psi_1\otimes I_{A_2})X]]R_{\psi_2}^{(2)}\mathbf{d}\psi_2.
$$
The scalar inside the trace is $\Tr[(\psi_1\otimes\psi_2)X]$. Substituting this into the previous equation gives
$$
X=\int\int d_1d_2\Tr[(\psi_1\otimes\psi_2)X] \left(R_{\psi_1}^{(1)}\otimes R_{\psi_2}^{(2)}\right)\mathbf{d}\psi_1\mathbf{d}\psi_2.
$$
Repeating this argument one subsystem at a time gives, for every joint operator $X\in L(A_1\otimes\cdots\otimes A_n)$,
$$
X=\int\cdots\int \left(\prod_{i=1}^n d_i\right)\Tr[(\psi_1\otimes\cdots\otimes\psi_n)X] \left(R_{\psi_1}^{(1)}\otimes\cdots\otimes R_{\psi_n}^{(n)}\right) \mathbf{d}\psi_1\cdots \mathbf{d}\psi_n.
$$

Now take $X=\rho$ and define $$p_\rho(\psi_1,\ldots,\psi_n)=\left(\prod_{i=1}^n d_i\right)\Tr[(\psi_1\otimes\cdots\otimes\psi_n)\rho].$$ Since $\rho\ge0$, this function is nonnegative. Integrating over each $\psi_i$ gives $$\int\cdots\int p_\rho(\psi_1,\ldots,\psi_n)\mathbf{d}\psi_1\cdots \mathbf{d}\psi_n = \left(\prod_{i=1}^n d_i\right)\Tr[(\pi_1\otimes\cdots\otimes\pi_n)\rho]= \left(\prod_{i=1}^n d_i\right) \frac{1}{\prod_{i=1}^n d_i}=1,$$ so $p_\rho$ is a probability density.

Put $\tau_{\psi_i}^{(i)}=\cN_i(R_{\psi_i}^{(i)})$. Applying the product channel to the expansion of $\rho$ gives
$$
\left(\bigotimes_{i=1}^n\cN_i\right)(\rho) =
\int\cdots\int p_\rho(\psi_1,\ldots,\psi_n)\bigotimes_{i=1}^n\tau_{\psi_i}^{(i)}\mathbf{d}\psi_1\cdots \mathbf{d}\psi_n.
$$
By \Cref{lem:highprivacy}, $\beta_i\cN_i(\pi_i)\le\tau_{\psi_i}^{(i)}\le\gamma_i\cN_i(\pi_i)$ for every $i$ and every pure state $\psi_i$.

If $0\le X\le Y$ and $0\le X'\le Y'$, then $Y\otimes Y'-X\otimes X'=(Y-X)\otimes Y'+X\otimes(Y'-X')\ge0$, and by induction we also have $\bigotimes_i Y_i \geq \bigotimes_i X_i$ for $0 \leq X_i \leq Y_i$. Therefore,
$$
\left(\prod_{i=1}^n\beta_i\right)\bigotimes_{i=1}^n\cN_i(\pi_i) \le
\bigotimes_{i=1}^n\tau_{\psi_i}^{(i)} \le
\left(\prod_{i=1}^n\gamma_i\right)\bigotimes_{i=1}^n\cN_i(\pi_i).
$$
Integrating this operator inequality against the probability density $p_\rho$ gives the two-sided inequality.

Now let $\rho$ and $\sigma$ be arbitrary joint input states, and put $S=\bigotimes_{i=1}^n\cN_i(\pi_i)$.
The pair of inequalities gives $$\left(\bigotimes_{i=1}^n\cN_i \right)(\rho)\le\left(\prod_{i=1}^n\gamma_i\right)S$$ and $$\left(\bigotimes_{i=1}^n\cN_i\right)(\sigma)\ge \left(\prod_{i=1}^n\beta_i \right)S.$$ Since every $\beta_i$ is positive, $$\left(\bigotimes_{i=1}^n\cN_i \right)(\rho)\le \left(\prod_{i=1}^n\gamma_i/\beta_i \right) \left(\bigotimes_{i=1}^n\cN_i\right)(\sigma).$$ 
By \Cref{lem: qldp-eq}, this is exactly $\varepsilon_{\mathrm{comp}}$-QLDP.
\end{proof}

Finally, we prove an asymmetric composition theorem. That is, we determine a privacy parameter $\eps_{\text{asym}}$ such that $\cN_1 \otimes \cN_2$ is $\varepsilon_{\mathrm{asym}}$-QLDP if both $\cN_1$ and $\cN_2$ are QLDP, but only one of the channels lies in the high privacy regime.

\begin{theorem}
Let $\cN_1\colon\cL(A_1)\to \cL(B_1)$ be $\varepsilon_1$-QLDP with $e^{\varepsilon_1}<d_1/(d_1-1)$, where $d_1=\dim A_1$. Let $\beta_1,\gamma_1$ be the constants from \Cref{lem:highprivacy}. Let $\cN_2\colon\cL(A_2)\to \cL(B_2)$ be $\varepsilon_2$-QLDP for some finite $\varepsilon_2$. Then $\cN_1\otimes \cN_2$ is $\varepsilon_{\mathrm{asym}}$-QLDP on $A_1\otimes A_2$, with respect to arbitrary entangled input pairs, where $\varepsilon_{\mathrm{asym}}=\log(\gamma_1/\beta_1)+2\varepsilon_2$.
\end{theorem}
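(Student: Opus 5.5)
We will use the expansion only on the first tensor factor. For any joint state $\rho$ on $A_1\otimes A_2$, the single-system identity from \Cref{thm: main}, applied blockwise on $A_1$ as in the proof of the composition theorem, gives
$$
\rho=\int d_1\, R^{(1)}_{\psi_1}\otimes \omega_{\psi_1}\,\mathbf{d}\psi_1,
\qquad
\omega_{\psi_1}\coloneqq\Tr_{A_1}[(\psi_1\otimes I_{A_2})\rho]\ge 0 .
$$
Write $p(\psi_1)=d_1\Tr[\omega_{\psi_1}]$. This is a probability density, since $d_1\int\psi_1\,\mathbf{d}\psi_1=I_{A_1}$. Wherever $p(\psi_1)>0$, set $\hat\omega_{\psi_1}=\omega_{\psi_1}/\Tr[\omega_{\psi_1}]$, which is a state on $A_2$. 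Applying the product channel then yields
$$
(\cN_1\otimes\cN_2)(\rho)=\int p(\psi_1)\,\tau^{(1)}_{\psi_1}\otimes \cN_2(\hat\omega_{\psi_1})\,\mathbf{d}\psi_1 .
$$
The point is that each integrand is a tensor product of two positive operators that we can control separately.

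Next we bound each factor against a fixed reference operator. By \Cref{lem:highprivacy}, $\beta_1\cN_1(\pi_1)\le\tau^{(1)}_{\psi_1}\le\gamma_1\cN_1(\pi_1)$, and $\beta_1>0$. For the second factor we use \Cref{lem: qldp-eq} with $X=\hat\omega_{\psi_1}$, which has unit trace:
$$
e^{-\varepsilon_2}\cN_2(\pi_2)\le\cN_2(\hat\omega_{\psi_1})\le e^{\varepsilon_2}\cN_2(\pi_2).
$$
Combining these via the tensor-product monotonicity for PSD operators proved in the composition theorem, we get, pointwise in $\psi_1$,
$$
\beta_1e^{-\varepsilon_2}\,S\le \tau^{(1)}_{\psi_1}\otimes\cN_2(\hat\omega_{\psi_1})\le \gamma_1e^{\varepsilon_2}\,S,
\qquad S\coloneqq\cN_1(\pi_1)\otimes\cN_2(\pi_2).
$$
Integrating against the probability density $p$ transfers these bounds to $(\cN_1\otimes\cN_2)(\rho)$ for every joint state $\rho$. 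For two arbitrary joint states $\rho,\sigma$ this gives
$$
(\cN_1\otimes\cN_2)(\rho)\le\gamma_1e^{\varepsilon_2}S\le\frac{\gamma_1}{\beta_1}e^{2\varepsilon_2}(\cN_1\otimes\cN_2)(\sigma).
$$
By \Cref{lem: qldp-eq}, this is exactly $\varepsilon_{\mathrm{asym}}$-QLDP with $\varepsilon_{\mathrm{asym}}=\log(\gamma_1/\beta_1)+2\varepsilon_2$.

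The main obstacle is the first step: justifying that the conditional operators $\omega_{\psi_1}$ are positive and that normalizing them is legitimate. We cannot expand on $A_2$ at all, since $\cN_2$ is not assumed to be in the high-privacy regime and $\cN_2(R^{(2)}_{\psi_2})$ may fail to be positive. So everything on $A_2$ must be carried by genuine states. Positivity holds because $\omega_{\psi_1}=\langle\psi_1|\rho|\psi_1\rangle$ is a partial compression of a PSD operator. Points with $\Tr[\omega_{\psi_1}]=0$ have $\omega_{\psi_1}=0$ and contribute nothing. The factor $e^{2\varepsilon_2}$, rather than $e^{\varepsilon_2}$, comes precisely from comparing both outputs to the common reference $\cN_2(\pi_2)$ instead of comparing them directly.
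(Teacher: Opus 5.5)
Your proof is correct and follows the paper's argument essentially step for step: a one-sided expansion on $A_1$ only, the high-privacy sandwich $\beta_1\cN_1(\pi_1)\le\tau^{(1)}_{\psi_1}\le\gamma_1\cN_1(\pi_1)$ for the first factor, the operator form of QLDP for the $A_2$ factor, and comparison of both outputs to the common reference $S=\cN_1(\pi_1)\otimes\cN_2(\pi_2)$. The only cosmetic difference is that you normalize $\omega_{\psi_1}$ into a state and move its trace into the density $p$. The paper instead keeps $\omega_{\psi_1}$ unnormalized and uses the trace-weighted bound $e^{-\varepsilon_2}\Tr[\omega_{\psi_1}]\cN_2(\pi_2)\le\cN_2(\omega_{\psi_1})\le e^{\varepsilon_2}\Tr[\omega_{\psi_1}]\cN_2(\pi_2)$.
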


\begin{proof}
Let $\pi_i=I_{A_i}/d_i$. For a joint input state $\rho_{A_1A_2}$, use the identity from \eqref{eq:single-expansion} only on $A_1$. Define $\omega_\psi^\rho=\Tr_{A_1}[(\psi\otimes I_{A_2})\rho]$. Then $\omega_\psi^\rho\ge0$. Also $\int d_1\Tr[\omega_\psi^\rho]\mathbf{d}\psi=\Tr[\rho]=1$. The one-sided expansion is
$$
\rho=\int d_1 \left( R_\psi\otimes\omega_\psi^\rho \right) \mathbf{d}\psi.
$$
Applying $\cN_1\otimes \cN_2$ gives $(\cN_1\otimes \cN_2)(\rho)=\int d_1 \left( \tau_\psi^{(1)}\otimes \cN_2(\omega_\psi^\rho) \right) \mathbf{d}\psi$, where $\tau_\psi^{(1)}=\cN_1(R_\psi)$.

By \Cref{lem:highprivacy}, $\beta_1\cN_1(\pi_1)\le\tau_\psi^{(1)}\le\gamma_1\cN_1(\pi_1)$. By \Cref{lem: qldp-eq} applied to the positive operator $\omega_\psi^\rho$, we also have $$e^{-\varepsilon_2}\Tr[\omega_\psi^\rho]\cN_2(\pi_2)\le \cN_2(\omega_\psi^\rho)\le e^{\varepsilon_2}\Tr[\omega_\psi^\rho]\cN_2(\pi_2).$$

Combining these bounds and integrating gives $$(\cN_1\otimes \cN_2)(\rho)\le\gamma_1e^{\varepsilon_2}\cN_1(\pi_1)\otimes \cN_2(\pi_2)$$ and $$(\cN_1\otimes \cN_2)(\rho)\ge\beta_1e^{-\varepsilon_2}\cN_1(\pi_1)\otimes \cN_2(\pi_2).$$

Now let $\rho$ and $\sigma$ be arbitrary joint input states, and put $S=\cN_1(\pi_1)\otimes \cN_2(\pi_2)$. The preceding bounds give $$(\cN_1\otimes \cN_2)(\rho)\le\gamma_1e^{\varepsilon_2}S$$ and $$(\cN_1\otimes \cN_2)(\sigma)\ge\beta_1e^{-\varepsilon_2}S.$$ Hence $$(\cN_1\otimes \cN_2)(\rho)\le(\gamma_1/\beta_1)e^{2\varepsilon_2}(\cN_1\otimes \cN_2)(\sigma).$$ By \Cref{lem: qldp-eq}, this proves the claimed QLDP guarantee.
\end{proof}

\begin{remark}
    Note that this composition result has worse parameters than basic composition, but allows for entangled inputs with arbitrary global measurements. For a single $d$-dimensional
input system, let $r_d(\eps)=\log(\gamma_\eps/\beta_\eps)$, where $\beta_\eps$ and
$\gamma_\eps$ are the constants from \Cref{lem:highprivacy}. For
$0\le \eps<\log(d/(d-1))$, direct substitution gives
$$ r_d(\eps)=\log \frac{(de^\eps-d+1)(1+(d-1)e^\eps)}{(e^\eps+d-1)(d-(d-1)e^\eps)
}. $$
If all input dimensions are $d$ and all local privacy parameters are $\eps$, the composed
privacy parameter is $\eps_{\rm comp}=nr_d(\eps)$. For fixed $d$ and small $\eps$, Taylor expansion gives
$r_d(\eps)=2(d-1/d)\eps+O_d(\eps^3)$. More explicitly, $r_d(0)=0$,
$r_d'(0)=2(d-1/d)$, and $r_d''(0)=0$, so the quadratic term cancels. Thus, near
$\eps=0$, this high-privacy composition bound is linear in the number of systems and depends on $d$ as $2(d-1/d)$.

For qubit inputs, $d=2$, the expression simplifies to
$r_2(\eps)=\log((2e^\eps-1)/(2-e^\eps))$, for
$0\le \eps<\log 2$. In particular, $r_2(\eps)=3\eps+O(\eps^3)$. Hence, for $n$ identical
qubit-input channels in the strict high-privacy regime, the composition parameter is
$\eps_{\rm comp}=n\log((2e^\eps-1)/(2-e^\eps))$, which behaves as $3n\eps$ for small
$\eps$.
\end{remark}

\section{Application to private learning theory}\label{sec: 5-learning}

We now apply our results to private learning-theoretic tasks with noisy channels.
We first prove that any binary-output quantum memory protocol acting on copies of the output of an entanglement-breaking channel can be simulated exactly by a protocol that measures the corresponding unprocessed input copies one at a time and stores only classical information.

\begin{theorem}\label{lem: classical-sim-high-priv}
Let $\cQ\colon \cL(\cH_A) \to \cL(\cH_B)$ be an entanglement-breaking channel. Suppose a learner receives $T$ copies of $\cQ(\rho)$, may use arbitrary quantum memory, and finally outputs a bit. Then there is a learner without quantum memory that receives $T$ unprocessed copies of $\rho$, stores only classical outcomes, and has the same output distribution for every input state $\rho$.
\end{theorem}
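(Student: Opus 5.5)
We use the measure-and-prepare form of $\cQ$ from \cite{horodecki2003entanglementbreaking}: there is a POVM $\{E_y\}_{y\in\cY}$ on $A$ and states $\{\sigma_y\}_{y\in\cY}$ on $B$ such that $\cQ(\rho)=\sum_{y}\Tr[E_y\rho]\,\sigma_y$ for every $\rho$. The $T$-fold product is then
\begin{equation*}
\cQ(\rho)^{\otimes T}=\sum_{y_1,\ldots,y_T}\Bigl(\prod_{t=1}^T\Tr[E_{y_t}\rho]\Bigr)\,\sigma_{y_1}\otimes\cdots\otimes\sigma_{y_T}.
\end{equation*}
The plan is to show that the quantum-memory learner's behaviour on $\cQ(\rho)^{\otimes T}$ depends only on the classical string $(y_1,\ldots,y_T)$, whose law is product i.i.d.\ with marginal $\Tr[E_y\rho]$. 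This law can be sampled by measuring each copy of $\rho$ separately.

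First, model the memory learner. Its most general strategy processes the copies adaptively, possibly with a quantum register carried between rounds, and ends with a two-outcome measurement. Collapsing all intermediate instruments and the final measurement into a single operator gives a POVM element $0\le\Lambda\le I$ on $B^{\otimes T}$. This holds even if copies arrive sequentially, since the overall acceptance probability is $\Tr[\Lambda\,\cQ(\rho)^{\otimes T}]$ for some such $\Lambda$ that does not depend on $\rho$. The output bit is therefore Bernoulli with parameter
\begin{equation*}
p(\rho)=\sum_{y_1,\ldots,y_T}\Bigl(\prod_t\Tr[E_{y_t}\rho]\Bigr)\,f(y_1,\ldots,y_T),\qquad f(\vec y)\coloneqq\Tr\bigl[\Lambda\,(\sigma_{y_1}\otimes\cdots\otimes\sigma_{y_T})\bigr]\in[0,1].
\end{equation*}

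Second, construct the classical-memory learner. For $t=1,\ldots,T$ it measures the $t$-th unprocessed copy of $\rho$ with the POVM $\{E_y\}$ and stores the outcome $y_t$. These are non-adaptive single-copy measurements, so only classical data is kept. After the last round it outputs $1$ with probability $f(y_1,\ldots,y_T)$, using classical randomness; the numbers $f(\vec y)$ are fixed in advance and independent of $\rho$. Since the copies are independent, the outcomes have joint law $\prod_t\Tr[E_{y_t}\rho]$, so its acceptance probability is exactly $p(\rho)$ for every input state $\rho$.

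The main obstacle is justifying the first step. We must check that an arbitrary interactive protocol with quantum memory, including intermediate measurements and adaptivity, reduces to a single $\rho$-independent effect $\Lambda$ on $B^{\otimes T}$. This follows by writing the protocol as a sequence of quantum instruments on the memory register and the newly arriving copy, and then taking the sum over all classical branches that lead to output $1$. If the POVM $\{E_y\}$ is continuous, as with the Haar-integral form in \Cref{thm: main}, the sums become integrals. The argument is unchanged, provided a finite-outcome measure-and-prepare decomposition is used when one is available; otherwise a measurable outcome space is needed.
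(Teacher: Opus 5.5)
Your proposal is correct and follows essentially the same route as the paper: you write $\cQ$ in measure-and-prepare form, reduce the quantum-memory learner to a single binary effect on $B^{\otimes T}$, and simulate it by measuring each copy of $\rho$ with $\{E_y\}$ and then outputting $1$ with probability $\Tr[\Lambda(\sigma_{y_1}\otimes\cdots\otimes\sigma_{y_T})]$. Your closing caveat about continuous POVMs can be dropped. In finite dimensions the Horodecki decomposition can always be taken with finitely many outcomes, because a separable Choi state is a finite convex combination of product states by Carath\'eodory's theorem.
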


\begin{proof}
Because every entanglement-breaking channel has a measure-and-prepare representation, we can write $\cQ(\rho)=\sum_{y\in Y}\Tr[E_y\rho]\sigma_y$ for some POVM $\{E_y\}_{y \in Y}$ and collection of states $\lbrace \sigma_y\rbrace_{y\in Y}$. Any binary quantum memory protocol on $T$ copies of $\cQ(\rho)$ is equivalent to a binary POVM on the $T$ output systems. Let $0\le M\le I$ be the measurement corresponding to output $1$. Its acceptance probability is
\begin{align} 
P_{\rm acc}(\rho)= \sum_{y_1,\ldots,y_T} \left(\prod_{t=1}^T\Tr[E_{y_t}\rho]\right)
\Tr\left[M(\sigma_{y_1}\otimes\cdots\otimes\sigma_{y_T})\right].
\label{eq:P-acc}
\end{align}
For a sequence $y^T=(y_1,\ldots,y_T)$, define $a(y^T)=\Tr[M(\sigma_{y_1}\otimes\cdots\otimes\sigma_{y_T})]$. Since $0\le M\le I$, we have $0\le a(y^T)\le 1$.

Now consider the following learning protocol for $\rho$ without quantum memory. On each fresh copy of $\rho$, it measures the POVM $\{E_y\}_{y\in Y}$ and stores the classical outcome. After $T$ copies, the learner has obtained a sequence $y^T$, and the protocol outputs $1$ with probability $a(y^T)$. The acceptance probability is given by the expression in \eqref{eq:P-acc}. Thus, the output distribution is the same as that for the original protocol that uses quantum memory for every $\rho$.
\end{proof}

\subsection{Purity testing with privacy}
We first recall the relevant setup from~\cite{CCHL}. In~\cite[Definition 4.15]{CCHL}, a state-learning algorithm \textit{without quantum memory} receives copies of an unknown state $\rho$ one at a time. 
On the $t$-th copy it performs an arbitrary POVM, which may depend on previous classical outcomes.
After $T$ copies, the algorithm outputs its prediction. Thus the learner stores only a classical sequence. In~\cite[Definition 4.16]{CCHL}, a state-learning algorithm \textit{with quantum memory} may store the copies of $\rho$ as quantum states. After receiving $T$ copies of the state, the learner may perform a joint POVM on $\rho^{\otimes T}$.

The purity-testing problem in Section 5.2 of~\cite{CCHL} is the following promise problem. Given copies of an unknown $n$-qubit state $\rho$, the goal is to distinguish the cases that $\rho$ is either pure or maximally mixed, $\rho=\pi_A$.

\begin{theorem}[\cite{CCHL}, Theorem 5.11]
Any learning algorithm without quantum memory that distinguishes $\rho$ pure from $\rho=\pi_A$ with success probability at least $2/3$ requires $T \ge \Omega(\sqrt d)=\Omega(2^{n/2})$ copies.
\end{theorem}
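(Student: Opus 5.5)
This is a cited result, \cite[Theorem~5.11]{CCHL}, which the paper uses as a black box. A proof would follow the standard Le Cam two-point / tree-representation argument for learners without quantum memory. The plan is as follows.

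\textbf{Setup.} Model any learner without quantum memory as a rooted tree of depth $T$. Each node carries an adaptively chosen POVM, which without loss of generality we refine to rank-one elements $\{w_e d\proj{\phi_e}\}$ with $\sum_e w_e=1$. Each leaf $\ell$ then has a probability
\[
p_\rho(\ell)=\prod_{t} w_{e_t}\, d\,\langle\phi_{e_t}|\rho|\phi_{e_t}\rangle.
\]
Success probability at least $2/3$ forces the total variation distance between the leaf distribution under $\pi_A$ and the leaf distribution under the mixture $\mathbb{E}_\psi[p_\psi]$ to be at least $1/3$. Here $\psi$ is a Haar-random pure state.

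\textbf{Key steps.}
\begin{enumerate}[(i)]
\item Under $\pi_A$ we have $p_{\pi}(\ell)=\prod_t w_{e_t}$.
\item Under the Haar mixture, the moment formula gives
\[
\mathbb{E}_\psi \prod_{t=1}^T \langle\phi_t|\psi|\phi_t\rangle=\frac{\Tr\bigl[P_{\mathrm{sym}}\bigotimes_t\phi_t\bigr]}{\binom{d+T-1}{T}}.
\]
Expanding $P_{\mathrm{sym}}=\frac{1}{T!}\sum_{\sigma\in S_T}\sigma$, every term $\Tr[\sigma\bigotimes_t\phi_t]$ is a product of cycle traces. The identity permutation contributes $1$, and all other terms have nonnegative real part in the sense needed for a one-sided bound.
\item Use the one-sided inequality $\mathrm{TV}\le\sum_\ell p_\pi(\ell)\max\bigl(0,1-p_{\mathrm{mix}}(\ell)/p_\pi(\ell)\bigr)$. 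It then suffices to show $p_{\mathrm{mix}}(\ell)/p_\pi(\ell)\ge 1-O(T^2/d)$ for every leaf $\ell$.
\item Establish this pointwise bound. The ratio equals
\[
\frac{d^T}{d(d+1)\cdots(d+T-1)}\;\frac{1}{T!}\sum_\sigma\Tr\Bigl[\sigma\bigotimes_t\phi_t\bigr]\cdot T!.
\]
Dropping all nonidentity terms requires showing $\sum_\sigma\Tr[\sigma\bigotimes\phi_t]\ge 1$. This holds because the sum equals $T!\,\Tr[P_{\mathrm{sym}}\bigotimes\phi_t]\ge\Tr[\bigotimes\phi_t]$, since $\bigotimes\phi_t$ is a rank-one projector onto $|v\rangle$ and $\|P_{\mathrm{sym}}v\|^2\cdot T!\ge 1$. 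That last inequality follows by expanding $\langle v|P_{\mathrm{sym}}|v\rangle$ as the permanent of the Gram matrix of the $\phi_t$, and the permanent of a PSD matrix is at least the product of its diagonal entries (Lieb / Marcus).
\item Conclude. The prefactor satisfies $\prod_{k<T} d/(d+k)\ge 1-T^2/d$, so $\mathrm{TV}\le T^2/d$. Requiring this to be at least $1/3$ gives $T=\Omega(\sqrt d)=\Omega(2^{n/2})$.
\end{enumerate}

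\textbf{Main obstacle.} The crux is step (iv): the permanent lower bound for PSD Gram matrices, combined with handling adaptivity. Adaptivity is harmless because the bound holds leafwise for arbitrary fixed sequences $\phi_t$. I would cite the Marcus permanent inequality for that step.
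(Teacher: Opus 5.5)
The paper does not prove this statement; it imports it from \cite{CCHL}. Your argument is the standard learning-tree proof behind that result and is correct, giving $T\ge\sqrt{d/3}$: you refine to rank-one POVMs, bound the likelihood ratio at each leaf from one side, use the Haar moment $\mathbb{E}_\psi[\psi^{\otimes T}]=P_{\mathrm{sym}}/\binom{d+T-1}{T}$, and apply the Marcus bound $\mathrm{Perm}(G)\ge\prod_t G_{tt}=1$ to the Gram matrix $G$ of the $\phi_t$. The one inaccuracy is the aside in step (ii) that nonidentity permutations contribute terms with nonnegative real part. This is false: for three real vectors with pairwise overlaps $-1/2$, each $3$-cycle contributes $-1/8$, although $\mathrm{Perm}(G)=3/2$. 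It is harmless, though, because step (iv) lower-bounds the whole permutation sum at once and never uses that claim.
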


\begin{theorem}[\cite{CCHL}, Theorem 5.13]
There is a learning algorithm without quantum memory that distinguishes $\rho$ pure from $\rho=\pi_A$ using $T=O(\sqrt d)=O(2^{n/2})$ copies.
\end{theorem}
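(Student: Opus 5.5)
The plan is to exhibit an explicit non-adaptive single-copy protocol, a \emph{collision tester in a random basis}, and show by a second-moment argument that $T=C\sqrt d$ copies suffice for a suitable constant $C$. First, the learner samples one Haar-random unitary $U$ on $\bC^d$ with $d=2^n$ and stores it as classical randomness. On each of the $T$ copies of $\rho$ it measures the basis $\{U|i\rangle\}_{i=1}^d$, recording outcomes $x_1,\dots,x_T\in[d]$. This uses only classical memory, as in \cite[Definition 4.15]{CCHL}. It then counts collisions,
\begin{equation*}
Z=\sum_{s<t}\mathbf 1[x_s=x_t],
\end{equation*}
and declares ``pure'' iff $Z\ge \tfrac{3}{2}\binom T2\frac1d$.

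In the maximally mixed case, the outcomes are i.i.d.\ uniform for every $U$. Hence $\mathbb E Z=\binom T2/d$, and a standard computation (the collision indicators are pairwise independent) gives $\mathrm{Var}(Z)\le \binom T2/d$. In the pure case $\rho=\proj{\phi}$, conditional on $U$ the outcomes are i.i.d.\ with law $p_i=|\langle i|U^\dagger|\phi\rangle|^2$. The vector $(p_i)_i$ is uniform on the simplex (Dirichlet$(1,\dots,1)$). Therefore $\mathbb E\sum_ip_i^2=2/(d+1)$, and $\sum_i p_i^2$ concentrates: its variance is $O(d^{-3})$, so by Chebyshev $\sum_ip_i^2\ge 1.9/d$ with probability $1-o(1)$. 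Alternatively, one can avoid concentration in $U$ by simply averaging over $U$. Moreover, $\sum_i p_i^3=O(d^{-2})$ with high probability, since $\mathbb E\sum_i p_i^3=6/((d+1)(d+2))$.

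Conditional on such a good $U$, we have $\mathbb E[Z\mid U]=\binom T2\sum_ip_i^2\ge 1.9\binom T2/d$. The usual collision-variance bound gives
\begin{equation*}
\mathrm{Var}(Z\mid U)\le \binom T2\sum_i p_i^2+T^3\sum_i p_i^3=O(T^2/d+T^3/d^2).
\end{equation*}
Chebyshev in both cases then shows the threshold test errs with probability at most
\begin{equation*}
O\!\left(\frac{T^2/d+T^3/d^2}{(T^2/d)^2}\right)=O\!\left(\frac{d}{T^2}+\frac1T\right).
\end{equation*}
This is below $1/3$ once $T=C\sqrt d$ for a large constant $C$ (adding the $o(1)$ bad-$U$ probability), giving $T=O(\sqrt d)=O(2^{n/2})$.

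The main obstacle I expect is the pure-state analysis. The outcome distribution depends on the unknown $\phi$, so a fixed basis fails: a computational basis state gives collisions every time, but $|\phi\rangle$ could also be spread uniformly over the basis, making it look like the maximally mixed state. The Haar randomization of the basis is what removes this dependence. The key computations are then the Dirichlet moments $\mathbb E\sum_ip_i^2$ and $\mathbb E\sum_i p_i^3$, together with a variance bound on $\sum_ip_i^2$. I would obtain these from the standard formula $\int\psi^{\otimes k}\,\mathbf d\psi=\Pi_{\mathrm{sym}}/\binom{d+k-1}{k}$, which generalizes the $k=2$ identity used in the proof of \Cref{thm: main}.
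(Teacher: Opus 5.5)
The paper does not prove this statement. It is quoted from \cite{CCHL} and used only to record, in the text and in the table, that the single-copy bound $\Theta(\sqrt d)$ is tight, so there is no in-paper argument to compare with. Your proposal is a correct, self-contained proof of the cited bound, via the standard reduction to classical uniformity testing. Measuring every copy in one Haar-random basis makes the outcome law uniform in the mixed case. In the pure case it is Dirichlet-distributed, with collision probability concentrating near $2/(d+1)$. Chebyshev on the collision count then gives error $O(d/T^2+1/T)$ plus the bad-basis probability. Your moment values are right and follow from the $k$-fold version of the Haar identity used in \Cref{thm: main}. Your conditional variance bound is also right: only pairs sharing one index are correlated, with covariance at most $\sum_i p_i^3$. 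The citation buys the paper brevity for a result it never uses inside a proof; your argument buys an explicit tester with explicit constants.

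A few details need tightening.
\begin{itemize}
\item The exact variance is $\mathrm{Var}(\sum_ip_i^2)=4(d-1)/((d+1)^2(d+2)(d+3))$, and the gap is $2/(d+1)-1.9/d=(0.1d-1.9)/(d(d+1))$. So your bad-basis probability is $O(1/d)$ only once $d$ is large. That suffices for an $O(\sqrt d)$ claim, with bounded $d$ handled by any finite-copy procedure such as single-copy tomography.
\item Your remark that averaging over $U$ avoids the concentration step is not accurate. Unconditionally, indicators of disjoint pairs have covariance exactly $\mathrm{Var}(\sum_ip_i^2)$. They therefore contribute $\Theta(T^4/d^3)$ to $\mathrm{Var}(Z)$, which is the same $O(1/d)$ term in another guise.
\item You do not need the high-probability bound on $\sum_ip_i^3$, and Markov only gives it with constant probability anyway. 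Use $\sum_ip_i^3\le(\sum_ip_i^2)^{3/2}$, together with the fact that the pure-case deviation from the threshold is at least $\tfrac{4}{19}\mathbb E[Z\mid U]$. Then the $T^3$ term contributes $O(\sqrt d/T)$ on every good basis.
\item The random basis is compatible with the no-memory model. You can fold the choice of $U$ into the first POVM, or replace Haar by a unitary $4$-design, since only moments of order at most four enter.
\end{itemize}
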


Thus, the authors of \cite{CCHL} prove that the no-quantum-memory sample complexity of this purity-testing task is $\Theta(\sqrt d)$. The same task has an $O(1)$ sample algorithm with quantum memory. This is recorded in Table 1 of~\cite{CCHL} and discussed in Section 1.1.1 of~\cite{CCHL}. Therefore the sample complexity separation for this task is $O(1)$ copies with quantum memory and $\Theta(2^{n/2})$ copies without quantum memory.
We now show that under entanglement-breaking noise, even a learner with quantum memory is subject to the noiseless single-copy measurement lower bound.

\begin{theorem}
Let $A=(\mathbb C^2)^{\otimes n}$, let $d=2^n$, and let $\pi_A=I_A/d$. For each $i=1,\ldots,n$, let $\cN_i\colon\cL(\mathbb C^2)\to \cL(B_i)$ be an entanglement-breaking channel. Put $\cQ=\cN_1\otimes\cdots\otimes\cN_n$.

Consider the purity-testing task with an additional privacy constraint in which the unknown $n$-qubit state $\rho$ is promised to be either pure or equal to $\pi_A$. The learner receives copies of $\cQ(\rho)$ and may use quantum memory. If the learner succeeds with probability at least $2/3$ for every pure state $\rho$, then $T=\Omega(\sqrt d)=\Omega(2^{n/2})$ copies are necessary.
\end{theorem}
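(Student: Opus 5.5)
The plan is to reduce to the noiseless single-copy lower bound of \cite{CCHL}, Theorem 5.11, through the classical simulation result \Cref{lem: classical-sim-high-priv}. The first step is to show that $\cQ=\cN_1\otimes\cdots\otimes\cN_n$ is itself entanglement-breaking on $A=(\mathbb C^2)^{\otimes n}$. Each $\cN_i$ has a measure-and-prepare form $\cN_i(X)=\sum_{y_i}\Tr[E^{(i)}_{y_i}X]\sigma^{(i)}_{y_i}$. Tensoring these gives
\begin{equation*}
\cQ(X)=\sum_{y_1,\ldots,y_n}\Tr\bigl[(E^{(1)}_{y_1}\otimes\cdots\otimes E^{(n)}_{y_n})X\bigr]\,\sigma^{(1)}_{y_1}\otimes\cdots\otimes\sigma^{(n)}_{y_n},
\end{equation*}
and $\{E^{(1)}_{y_1}\otimes\cdots\otimes E^{(n)}_{y_n}\}$ is a POVM on $A$. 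This is a measure-and-prepare representation, so $\cQ$ is entanglement-breaking. This argument uses only the product structure; it does not need the input to be a product state.

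Next, take any learner with quantum memory that uses $T$ copies of $\cQ(\rho)$ and outputs a bit, namely ``pure'' or ``maximally mixed.'' By \Cref{lem: classical-sim-high-priv} applied to $\cQ$, there is a learner without quantum memory that uses $T$ unprocessed copies of $\rho$ and has exactly the same output distribution for every $\rho$. In particular, its success probability is at least $2/3$ for every pure $\rho$, and also when $\rho=\pi_A$, which I read as part of the promise-problem success requirement. The simulating learner measures a fixed POVM $\{E_y\}$ on each copy and then post-processes classically. This is a special case of the adaptive single-copy learners of \cite[Definition 4.15]{CCHL}.

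Finally, Theorem 5.11 of \cite{CCHL} gives $T=\Omega(\sqrt d)=\Omega(2^{n/2})$ for that simulating learner. Since it uses the same number of copies $T$, the same bound applies to the original quantum-memory learner acting on $\cQ(\rho)$.

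The main obstacle is making sure the success criteria match. The statement mentions success ``for every pure state $\rho$,'' whereas the \cite{CCHL} lower bound concerns distinguishing the pure case from $\pi_A$. I would read the hypothesis as success on both branches of the promise. Because the simulation preserves the output distribution exactly, whatever success guarantee the noisy learner has on each branch transfers verbatim to the simulating learner. The remaining point is that the \cite{CCHL} bound really covers the pure-versus-maximally-mixed promise with $2/3$ success on both hypotheses, which is how it is stated above. Everything else is routine.
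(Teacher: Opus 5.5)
Your proposal is correct and follows the paper's proof: simulate the quantum-memory learner on copies of $\cQ(\rho)$ by a classical-memory learner on unprocessed copies of $\rho$ via \Cref{lem: classical-sim-high-priv}, then invoke \cite[Theorem 5.11]{CCHL}. The only difference is that you explicitly check that $\cQ=\cN_1\otimes\cdots\otimes\cN_n$ is entanglement-breaking by tensoring the measure-and-prepare forms of the $\cN_i$, a step the paper uses without comment.
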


\begin{proof}
Suppose there were such a learner using $T$ copies. By \Cref{lem: classical-sim-high-priv}, it can be simulated exactly by a no-quantum-memory learner that receives $T$ unprocessed copies of $\rho$. The simulated learner has the same success probability for every input state. In particular, it distinguishes $\rho$ pure from $\rho=\pi_A$ with success probability at least $2/3$ without quantum memory. By \cite[Theorem 5.11]{CCHL}, this requires $T=\Omega(\sqrt d)$ samples, where $d=2^n$.
\end{proof}

\subsection{Bipartite product testing with privacy}

Let $\cB_n$ denote the set of pure $n$-partite states that are
product across at least one nontrivial bipartition. Given copies of
an unknown pure state
$\psi\in\cD((\mathbb C^q)^{\otimes n})$, the bipartite
product-testing problem is to distinguish the following two cases:
\begin{enumerate}
    \item $\psi\in\cB_n$.
    \item $\psi$ is $\eta$-far away from $\cB_n$ for some fixed $\eta > 0$,    \begin{equation}\label{eq:product-testing-promise}
        \inf_{\phi\in\cB_n}
    \frac12\|\psi-\phi\|_1\geq \eta.
    \end{equation}
\end{enumerate}

Here we consider a noisy version of this task. Let $\cN\colon \cL(A)\to\cL(B)$ be a fixed known channel. The promise is imposed on the original state $\psi$, but the tester receives $T$ independent noisy copies $\cN(\psi)^{\otimes T}$. More precisely, a $T$-copy tester is a two-outcome POVM $\{M,I-M\}$ on $B^{\otimes T}$ such that
$$\operatorname{Tr}[M\cN(\psi)^{\otimes T}]\geq\frac{2}{3}$$
for every $\psi\in\cB_n$, and
$$\operatorname{Tr}[M\cN(\psi)^{\otimes T}]\leq\frac{1}{3}$$
for every pure state $\psi$ satisfying
$$\inf_{\phi\in\cB_n}\frac{1}{2}\|\psi-\phi\|_1\geq\eta.$$

\begin{theorem}
Let $A=(\mathbb C^q)^{\otimes n}$, where $q\geq 2$. For each
$i\in\{1,\ldots,n\}$, let
$\cN_i\colon\cL(\mathbb C^q)\to\cL(B_i)$
be an $\varepsilon_i$-QLDP channel satisfying
$$\varepsilon_i\leq\log\frac{q}{q-1}.$$
Set $\cQ=\cN_1\otimes\cdots\otimes\cN_n.$
Suppose that a learner receives $T$ copies of $\cQ(\psi)$,
may use arbitrary quantum memory, and solves the bipartite
product-testing problem with success probability at least $2/3$.
Then
$$T=\Omega(q^{n/4}).$$
In particular, for qubit systems, $T=\Omega(2^{n/4}).$
\end{theorem}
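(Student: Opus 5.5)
The plan follows the purity-testing argument: reduce to a noiseless single-copy tester, then invoke the known single-copy lower bound of~\cite{beckey2025product}.

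First, each $\cN_i$ is $\varepsilon_i$-QLDP with $\varepsilon_i\le\log\frac{q}{q-1}$ on a $q$-dimensional input. So \Cref{thm: main} shows that each $\cN_i$ is entanglement-breaking.

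Second, the tensor product $\cQ=\cN_1\otimes\cdots\otimes\cN_n$ is entanglement-breaking. Write each factor in measure-and-prepare form $\cN_i(X)=\sum_{y_i}\Tr[E^{(i)}_{y_i}X]\sigma^{(i)}_{y_i}$. Then
$$\cQ(X)=\sum_{y_1,\ldots,y_n}\Tr\bigl[(E^{(1)}_{y_1}\otimes\cdots\otimes E^{(n)}_{y_n})X\bigr]\,\sigma^{(1)}_{y_1}\otimes\cdots\otimes\sigma^{(n)}_{y_n},$$
where the product operators form a POVM on $A$. This holds for every operator $X$ on $A$ by linearity, so it is valid even when the input $\psi$ is entangled across the $n$ parties. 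Hence $\cQ$ is measure-and-prepare, and therefore entanglement-breaking.

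Third, apply \Cref{lem: classical-sim-high-priv} to $\cQ$. The $T$-copy tester $\{M,I-M\}$ on $B^{\otimes T}$ is a binary-output protocol with quantum memory. It is therefore simulated exactly by a learner that measures $T$ unprocessed copies of $\psi$ one at a time, stores only classical outcomes, and has the same acceptance probability for every $\psi$. In particular, the simulated learner accepts every $\psi\in\cB_n$ with probability at least $2/3$, and every $\eta$-far pure $\psi$ with probability at most $1/3$. So it solves the noiseless bipartite product-testing problem without quantum memory, using the same $T$ copies.

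Finally, invoke the lower bound of~\cite{beckey2025product}: any algorithm without quantum memory, using single-copy (possibly adaptive) measurements, needs $T=\Omega(q^{n/4})$ copies for bipartite product testing on $n$ systems of local dimension $q$. This gives $T=\Omega(q^{n/4})$, and $q=2$ gives the qubit case.

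The main point to check is the last step. The model in~\cite{beckey2025product} must cover adaptive single-copy measurements with classical post-processing and randomized output. It must also fix the promise constant $\eta$ consistently with \eqref{eq:product-testing-promise}. If that lower bound is stated only for nonadaptive measurements, I would either check that their proof handles adaptivity, as in the framework of~\cite{CCHL}, or restrict the statement accordingly. The simulation in \Cref{lem: classical-sim-high-priv} is itself nonadaptive: it applies the fixed POVM $\{E_y\}$ on every copy and then does classical post-processing, so even a nonadaptive lower bound applies.
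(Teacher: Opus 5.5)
Your proposal is correct and follows the paper's proof essentially step by step. \Cref{thm: main} makes each $\cN_i$ entanglement-breaking, hence $\cQ$ is too. \Cref{lem: classical-sim-high-priv} then reduces to a no-quantum-memory tester acting on unprocessed copies of $\psi$, and \cite[Thm~4.4]{beckey2025product} gives $T=\Omega(q^{n/4})$.

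The paper only asserts two points that you check explicitly, and both checks are sound. Your product measure-and-prepare decomposition shows $\cQ$ is entanglement-breaking even for entangled $\psi$. Your remark that the simulated tester is nonadaptive means even a nonadaptive single-copy lower bound would suffice.
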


\begin{proof}
By the local high-privacy condition and
Theorem~\ref{thm: main}, each $\cN_i$ is
entanglement-breaking and thus also $\cQ=\bigotimes_i\cN_i$.
By \Cref{lem: classical-sim-high-priv}, the protocol with quantum memory acting on $T$ copies of $\cQ(\psi)$ has the same output distribution as a protocol that acts on one copy of $\psi$ at a time and stores only classical outcomes. This is the single-copy measurement setting for bipartite product testing, and the single-copy lower bound of
\cite[Thm~4.4]{beckey2025product} gives $T=\Omega(q^{n/4}).$
\end{proof}

\subsection{Single-copy globally private noise} 

One could consider an alternative approach to privatizing the $n$-qubit input state by applying a single privatizing channel to the entire $n$-qubit input system $A$ with $\dim A=d=2^n$. In this setting, we show that the definition of QLDP already gives a stronger lower bound in the high-privacy regime than the one obtained from the entanglement-breaking structure.

Denote by
$\Delta(\rho,\sigma):=\frac{1}{2}\|\rho-\sigma\|_1$ the trace distance between two states $\rho,\sigma$.
 Let $p\in(0,1)$, set $q=1-p$, and fix
$\alpha\in(0,pq)$. If $\cA$ is an $\varepsilon$-QLDP channel
and a binary test using $m$ copies of either $\cA(\rho)$ or
$\cA(\sigma)$ has error at most $\alpha$ with priors
$p$ and $q$, then
$$m\geq\left(1-\frac{\alpha(1-\alpha)}{pq}\right)
\frac{e^\varepsilon+1}
{2\left(e^{\varepsilon/2}-1\right)^2}
\frac{1}{\Delta(\rho,\sigma)}.$$
This is the second term in the maximum defining
$C_{\varepsilon,p,q,\alpha}$ in \cite[Theorem~4]{nuradha2024contraction}.

We apply this result as follows. Assume that the original tester uses $T$ copies with an error probability of at most $1/3$ under either hypothesis. Apply this tester independently on five sets of $T$ copies and take a majority vote. The error
probability of the resulting $5T$-copy tester is at most
$$\alpha =\sum_{j=3}^{5}\binom{5}{j}\left(\frac{1}{3}\right)^j\left(\frac{2}{3}\right)^{5-j} = \frac{17}{81} <\frac{1}{4}. $$
We may therefore apply the result from \cite{nuradha2024contraction} with equal priors $p=q=1/2$ and
$\alpha=17/81$, giving
$$ 5T\geq  \frac{2209}{13122}\frac{e^\varepsilon+1}
{\left(e^{\varepsilon/2}-1\right)^2}
\frac{1}{\Delta(\rho,\sigma)}. $$
Consequently,
$$ T\geq \frac{2209}{65610}\frac{e^\varepsilon+1}
{\left(e^{\varepsilon/2}-1\right)^2}
\frac{1}{\Delta(\rho,\sigma)}.$$
As observed in the proof of \cite[Corollary~2]{nuradha2024contraction}
$$\frac{e^\varepsilon+1}
{\left(e^{\varepsilon/2}-1\right)^2}
\geq \left( \frac{e^\varepsilon+1}{e^\varepsilon-1}
\right)^2 = \frac{1}{\tanh^2(\varepsilon/2)}. $$
Hence
\begin{equation}
     T\geq\frac{2209}{65610\,\Delta(\rho,\sigma)}\frac{1}{\tanh^2(\varepsilon/2)}. \label{eq:lower_bound_T_eps}
\end{equation}

In particular, if
$ \varepsilon\leq\log\left(\frac{d}{d-1}\right),$
then
$\frac{1}{\tanh(\varepsilon/2)} \geq 2d-1,$
and therefore
$$T\geq
\frac{2209}{65610\,\Delta(\rho,\sigma)}
(2d-1)^2.$$

We will now apply this lower bound to both tasks we have considered so far.

\begin{corollary}
\label{cor:global-private-purity}
Let $\cN\colon \cL(A)\to\cL(B)$ be $\varepsilon$-QLDP, and set $d=\dim A$. Suppose that a learner receives $T$ copies of $\cN(\rho)$ and distinguishes the case that $\rho$ is pure from the case that $\rho=\pi_A$ with success probability at least $2/3$ for every pure state $\rho$. Then
$$T\geq \frac{2209d}{65610(d-1)}\left(
\frac{e^\varepsilon+1}{e^\varepsilon-1}
\right)^2.$$
In particular, if
$\varepsilon\leq\log\!\left(\frac{d}{d-1}\right),$
then
$$T\geq\frac{2209d}{65610(d-1)}(2d-1)^2=\Omega(d^2).$$
For an $n$-qubit input, this gives $T=\Omega(4^n)$.
\end{corollary}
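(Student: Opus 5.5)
The plan is to specialize the general lower bound \eqref{eq:lower_bound_T_eps} to one well-chosen pair of hypotheses. Fix any pure state $\psi\in\cD(A)$. A learner that succeeds with probability at least $2/3$ for every pure input, and also when the input is $\pi_A$, in particular yields a binary test on $T$ copies of either $\cN(\psi)$ or $\cN(\pi_A)$. Its error under each of these two hypotheses is at most $1/3$.

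This is exactly the situation used to derive \eqref{eq:lower_bound_T_eps}. That derivation boosts the tester by a five-fold majority vote to error $\alpha=17/81$, and then invokes \cite[Theorem~4]{nuradha2024contraction} with equal priors for the $\varepsilon$-QLDP channel $\cN$. We may therefore apply \eqref{eq:lower_bound_T_eps} with $\rho=\psi$ and $\sigma=\pi_A$.

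The only computation needed is the trace distance. The operator $\psi-\pi_A$ has eigenvalue $1-1/d$ on $|\psi\rangle$ and eigenvalue $-1/d$ with multiplicity $d-1$. Hence
\begin{equation*}
\Delta(\psi,\pi_A)=\tfrac12\Bigl((1-\tfrac1d)+(d-1)\tfrac1d\Bigr)=\frac{d-1}{d}.
\end{equation*}
Substituting $1/\Delta=d/(d-1)$ and $1/\tanh^2(\varepsilon/2)=\bigl((e^\varepsilon+1)/(e^\varepsilon-1)\bigr)^2$ into \eqref{eq:lower_bound_T_eps} gives the first displayed bound of the corollary. When $\varepsilon=0$ the right-hand side is infinite. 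This is consistent: a $0$-QLDP channel is constant on states, so no tester exists, and the statement holds vacuously.

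For the high-privacy consequence, note that $\varepsilon\mapsto\tanh(\varepsilon/2)=(e^\varepsilon-1)/(e^\varepsilon+1)$ is increasing. At $e^\varepsilon=d/(d-1)$ it equals
\begin{equation*}
\frac{1/(d-1)}{(2d-1)/(d-1)}=\frac{1}{2d-1}.
\end{equation*}
So $\varepsilon\le\log(d/(d-1))$ implies $(e^\varepsilon+1)/(e^\varepsilon-1)\ge 2d-1$, which gives $T\ge\frac{2209d}{65610(d-1)}(2d-1)^2=\Omega(d^2)$. Setting $d=2^n$ gives $T=\Omega(4^n)$.

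No step is a genuine obstacle. The only point needing care is justifying that the promise-problem tester restricts to a valid two-hypothesis test with error at most $1/3$ under each hypothesis. This holds because success is required for every pure state, in particular for the chosen $\psi$, and for $\pi_A$.
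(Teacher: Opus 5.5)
Your proposal is correct and follows essentially the same route as the paper: you specialize \eqref{eq:lower_bound_T_eps} to the pair $\cN(\psi)$ versus $\cN(\pi_A)$ with $\Delta(\psi,\pi_A)=(d-1)/d$, and then use monotonicity of $\tanh(\varepsilon/2)$ to get $(e^\varepsilon+1)/(e^\varepsilon-1)\ge 2d-1$ when $\varepsilon\le\log(d/(d-1))$. Your additional details, namely the eigenvalue computation of the trace distance and the vacuous $\varepsilon=0$ case, are valid but not needed beyond what the paper does.
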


\begin{proof}
Fix any pure state $\psi$. A successful test must distinguish $\cN(\psi)$ from $\cN(\pi_A)$ for every pure input. Since
$\Delta(\psi,\pi_A)=1-1/d,$
the first claim follows from the discussion above leading to~\eqref{eq:lower_bound_T_eps}. Under the stated high-privacy condition (i.e., $\varepsilon\leq\log({d}/{(d-1)})$),
$$\frac{e^\varepsilon+1}{e^\varepsilon-1}\geq 2d-1,$$
which proves the second claim. 
\end{proof}

We next apply the same argument to bipartite product testing under global noise. Let $A=(\bC^q)^{\otimes n}$ and let $d=q^n$. The tester receives copies of $\cN(\psi)$, where a single $\varepsilon$-QLDP channel $\cN\colon \cL(A)\to\cL(B)$ acts on the entire $n$-partite input.

\begin{corollary}\label{cor:global-private-product}
Let $A=(\bC^q)^{\otimes n}$, where $q,n\geq 2$, and put $d=q^n$. Let $\cN\colon \cL(A)\to\cL(B)$ be an $\varepsilon$-QLDP channel. Assume that there exists a pure state $\psi\in\cD(A)$ satisfying $\inf_{\omega\in\cB_n}\Delta(\psi,\omega)\geq\eta.$
Suppose that a tester receives $T$ copies of $\cN(\xi)$, where $\xi$ is an unknown pure state, and solves the bipartite product-testing problem with success probability at least $2/3$. Then
$$T\geq\frac{2209}{65610}\frac{1}{\tanh^2(\varepsilon/2)}.$$
In particular, if
$\varepsilon\leq\log\frac{d}{d-1},$ then $T=\Omega(q^{2n}).$
\end{corollary}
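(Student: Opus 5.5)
The plan mirrors the proof of \Cref{cor:global-private-purity}: reduce the product-testing task to a binary hypothesis test between two fixed noisy states, then apply the bound \eqref{eq:lower_bound_T_eps}. The two hypotheses will be one state from $\cB_n$ and the state $\psi$ given by the assumption.

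\textbf{Choosing the pair of states.} Fix a product pure state $\phi\in\cB_n$, for instance $\proj{0}^{\otimes n}$, which is product across every bipartition. Let $\psi$ be the assumed pure state with $\inf_{\omega\in\cB_n}\Delta(\psi,\omega)\ge\eta$. A tester that succeeds with probability at least $2/3$ must accept $\cN(\phi)^{\otimes T}$ with probability at least $2/3$ and accept $\cN(\psi)^{\otimes T}$ with probability at most $1/3$. It is therefore a binary test between $\cA(\rho)=\cN(\phi)$ and $\cA(\sigma)=\cN(\psi)$, with error at most $1/3$ under each hypothesis.

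\textbf{Applying the contraction bound.} Amplify by a five-fold majority vote as in the discussion above, so the error drops to $17/81<1/4$ with equal priors. The bound from \cite[Theorem~4]{nuradha2024contraction} then yields \eqref{eq:lower_bound_T_eps}:
$$T\ge \frac{2209}{65610\,\Delta(\phi,\psi)}\frac{1}{\tanh^2(\varepsilon/2)}.$$
Since $\phi$ and $\psi$ are pure, $\Delta(\phi,\psi)\le1$, so we may drop the factor $1/\Delta(\phi,\psi)$. This gives exactly the claimed inequality $T\ge\frac{2209}{65610}\tanh^{-2}(\varepsilon/2)$. The role of $\psi$ is only to guarantee that the ``far'' hypothesis is nonempty, so the test between $\phi$ and $\psi$ is forced; the value of $\eta$ does not enter the bound.

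\textbf{High-privacy regime.} Suppose $\varepsilon\le\log(d/(d-1))$. Because $\tanh$ is increasing, $\frac{1}{\tanh(\varepsilon/2)}=\frac{e^\varepsilon+1}{e^\varepsilon-1}\ge\frac{d/(d-1)+1}{d/(d-1)-1}=2d-1$. Hence $T\ge\frac{2209}{65610}(2d-1)^2=\Omega(d^2)=\Omega(q^{2n})$.

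The only step needing care is confirming that the hypotheses of the cited two-hypothesis bound hold. The amplified test uses $5T$ copies, error $\alpha=17/81<pq=1/4$, and the privacy of $\cN$ applies to the single-copy channel, all as established in the preceding discussion. Beyond this, no real obstacle is expected.
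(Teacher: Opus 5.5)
Your proposal is correct and follows the paper's proof essentially step for step. As in the paper, you fix a pure state $\phi\in\cB_n$ and the assumed far state $\psi$, treat the tester as a binary test between $\cN(\phi)$ and $\cN(\psi)$ with error at most $1/3$, and apply \eqref{eq:lower_bound_T_eps}; you then drop the factor $1/\Delta(\phi,\psi)\geq 1$ and use $\tanh(\varepsilon/2)\le 1/(2d-1)$ in the high-privacy regime, while your explicit choice $\phi=\proj{0}^{\otimes n}$ and the remark that $\eta$ never enters the bound are harmless additions.
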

\begin{proof}
Choose any pure state $\phi\in\cB_n$ and any pure state $\psi$ satisfying
$\inf_{\omega\in\cB_n}\Delta(\psi,\omega)\geq\eta.$
Since the tester succeeds with probability at least $2/3$ on every input satisfying either promise, it satisfies
$$\operatorname{Tr}[M\cN(\phi)^{\otimes T}]\geq\frac23$$
and
$$\operatorname{Tr}[M\cN(\psi)^{\otimes T}]\leq\frac13.$$
Equivalently, the tester distinguishes $\cN(\phi)^{\otimes T}$ from $\cN(\psi)^{\otimes T}$ with error probability at most $1/3$ under either hypothesis.

Applying \Cref{eq:lower_bound_T_eps} to this pair gives
$$T\geq\frac{2209}{65610\,\Delta(\phi,\psi)}\frac{1}{\tanh^2(\varepsilon/2)}.$$
Since the trace distance between two states is at most one, we get
$$T\geq\frac{2209}{65610}\frac{1}{\tanh^2(\varepsilon/2)}.$$

Now suppose that $\varepsilon\leq\log(d/(d-1))$. We again obtain
$\tanh(\varepsilon/2)\leq\frac{\frac{d}{d-1}-1}{\frac{d}{d-1}+1}=\frac{1}{2d-1}$, and so it follows that
$$T\geq\frac{2209}{65610}(2d-1)^2.$$
Finally, $d=q^n$, which gives $T=\Omega(q^{2n})$.
\end{proof}

\Cref{cor:global-private-purity} and \Cref{cor:global-private-product} explain the limitation of the preceding global channel application. In the global channel setting, the reduction using entanglement-breaking channels with \Cref{lem: classical-sim-high-priv} to the single-copy measurement setting gives the lower bound $\Omega(\sqrt d)$ for purity testing and $\Omega(q^{n/4})$ for bipartite product testing, but at the high-privacy threshold the privacy definition itself gives the stronger lower bounds $\Omega(d^2)$ and $\Omega(q^{2n})$, respectively.

\begin{table}[t]
\centering
\label{tab:private-learning-comparison}

\small
\renewcommand{\arraystretch}{0.85}
\setlength{\tabcolsep}{5pt}

\begin{tabular*}{\textwidth}{@{\extracolsep{\fill}}lccc@{}}
\toprule
Learning task & \makecell{Single-copy\\measurements}
&
Quantum memory
&
\makecell{Single-copy \\highly private channel}
\\
\midrule

Purity testing
&
\makecell{
$\Theta(2^{n/2})$\\
{\cite[Theorems~5.11 and~5.13]{CCHL}}
}
&
\makecell{
$\Theta(1)$\\
{\cite[Table~1]{CCHL}}}
&
\makecell{$\Omega(4^n)$\\
{\Cref{cor:global-private-purity}}
}
\\[4mm]

Bipartite product testing
&
\makecell{$\Omega(q^{n/4})$\\
{\cite[Theorem~4.4]{beckey2025product}}
}
&
\makecell{
Lower bound: $\Omega(n/\log n)$\\
Upper bound: $O(n/\eta^2)$\\
{\cite[Sec.~1.1]{beckey2025product}}
}
&
\makecell{$\Omega(q^{2n})$\\
{\Cref{cor:global-private-product}}
}
\\

\bottomrule
\end{tabular*}
\caption{Comparison of sample-complexity bounds for the two learning tasks considered
in this section. For bipartite product testing, $q\geq 2$ is the local dimension and $\eta>0$ is the trace-distance separation in the promise in Eq.~\eqref{eq:product-testing-promise}. The first column allows adaptive measurements of one copy
at a time with only classical information retained between copies. The second
column allows the learner to retain quantum memory and perform joint
measurements across several copies. In the last column, one
$\varepsilon$-QLDP channel acts on the entire input before each copy is
provided to a learner with arbitrary quantum memory. The lower bounds in the first column also
apply under highly private local noise.}  
\end{table}

\section{Conclusion}\label{sec: 6-conclusion}

In this work, we established the consequences of enforcing high quantum local differential privacy for preserving entanglement. We showed that if $\cN\colon \cL(A)\to\cL(B)$ is an
$\varepsilon$-QLDP channel with $d=\dim A$ and $$\varepsilon\leq\log\frac{d}{d-1},$$
then $\cN$ is entanglement-breaking.
For every value of
$\varepsilon$ strictly above it, there is a channel that is
$\varepsilon$-QLDP but \emph{not} entanglement-breaking.

We obtained two further refinements of this result. First, in the approximate
privacy setting, if $\cN$ is $(\varepsilon,\delta)$-QLDP and
$\varepsilon\leq\log(d/(d-1))$, then there is an entanglement-breaking channel
$\cM$ satisfying $$\frac12\|\cN-\cM\|_\diamond\leq(d-1)\delta.$$

Second, if one only considers entanglement with a reference
system of dimension at most $r$, then the relevant threshold for the privacy parameter is
$\varepsilon\leq\log(r/(r-1))$. In particular, the condition
$\varepsilon\leq\log 2$ destroys entanglement with every qubit reference,
regardless of the full input dimension.

We further studied composition when several private channels act on different
input systems. When every local channel lies strictly inside the
high-privacy regime, we proved a privacy guarantee for the joint
channel that applies to arbitrary pairs of joint input states. This includes
states entangled across the input systems and also allows arbitrary global
measurements on the outputs.

Finally, we applied the entanglement-breaking threshold to noisy quantum learning. Under highly private local noise, the known single-copy lower bounds continue to apply to learners with arbitrary quantum memory, giving sample-complexity lower bounds of $\Omega(2^{n/2})$ for purity testing and $\Omega(q^{n/4})$ for bipartite product testing. When one highly private channel acts on the entire input, a direct consequence of the privacy definition strengthens these lower bounds to $\Omega(4^n)$ and $\Omega(q^{2n})$, respectively.

These findings outline a boundary between classical and quantum utility in the setting of quantum local differential privacy. Because channels operating below the stated privacy threshold cannot preserve entanglement, they are unsuitable for information processing tasks that rely on resources like entanglement. Consequently, when determining privacy parameters for distributed quantum protocols, the loss of quantum resources presents a strict limit on operational utility. This also encourages the use of flexible privacy frameworks such as quantum pufferfish privacy~\cite{nuradha_QPP} instead of worst-case privacy frameworks like QLDP, whenever domain knowledge is available to harness operational utility from quantum resources.

\section*{Acknowledgments}
We thank Jacob Beckey for helpful discussions on noisy quantum learning theory. TN acknowledges support from the IQUIST Postdoctoral Fellowship from the Illinois Quantum Information Science and Technology Center at the University of Illinois Urbana-Champaign.
SB and FL were supported by National Science Foundation Grants No.~2426103 and 2442410.

\printbibliography

\appendix 
\section{Privacy near the completely depolarizing channel}\label{app: geometric-privacy}

In this appendix we relate the high-privacy threshold to separable balls around the maximally mixed state. Let $a=\dim A$, $b=\dim B$, and $D=ab$. For a channel $\cN\colon \cL(A)\to \cL(B)$, write its normalized Choi state as $J(\cN)_{AB}=(\id_A\otimes \cN)(\Phi_{AA'})$, where $|\Phi\rangle=a^{-1/2}\sum_{r=1}^a|r\rangle_A|r\rangle_{A'}$.

Let $\Delta_{A\to B}$ be the completely depolarizing channel, $\Delta_{A\to B}(X)=\Tr[X]I_B/b$. Then $J(\Delta)=I_{AB}/(ab)$. We use the following result of Gurvits and Barnum.

\begin{theorem}{\cite[Corollary 3]{gurvits2002largest}}\label{thm: gurvits-barnum}
Let $\rho$ be a state on a bipartite Hilbert space of total dimension $D$. If 
\begin{equation}\label{eq: GB-radius}
\|\rho-I_D/D\|_2\le1/\sqrt{D(D-1)},
\end{equation}
then $\rho$ is separable.
\end{theorem}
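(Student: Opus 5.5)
The statement is \cite[Corollary 3]{gurvits2002largest}, so in the paper it can simply be invoked. If I had to reprove it, I would follow the Gurvits--Barnum strategy: first show that an unnormalized ball around the identity is separable, then sharpen the radius using the trace-one constraint. Throughout, write $D=ab$ for the total dimension and $\|\cdot\|_2$ for the Hilbert--Schmidt norm.

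\textbf{Step 1 (unnormalized ball).} I would aim to show that every Hermitian $\Delta$ on $\mathbb C^a\otimes\mathbb C^b$ with $\|\Delta\|_2\le 1$ makes $I_D+\Delta$ separable. I would use the Horodecki criterion: $X\ge 0$ is separable iff $(\id\otimes\Lambda)(X)\ge0$ for every positive map $\Lambda\colon\cL(\mathbb C^b)\to\cL(\mathbb C^a)$. After normalizing $\Lambda$ (for instance by a suitable scaling, or by reducing to positive maps with $\Tr\Lambda(I)$ fixed), it suffices to show $(\id\otimes\Lambda)(I)\ge \|(\id\otimes\Lambda)(\Delta)\|_{\infty}$. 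The key inequality is a Hilbert--Schmidt estimate for positive maps acting on one tensor factor. I would first prove it for rank-one product $\Delta$ and then extend by a Schmidt-type decomposition $\Delta=\sum_k s_k\,A_k\otimes B_k$ with orthonormal families $\{A_k\},\{B_k\}$. This is exactly where the $\|\cdot\|_2$-norm, rather than the operator norm, enters.

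\textbf{Step 2 (normalized ball and the trace-zero gain).} Write $\rho=I/D+X$ with $\Tr X=0$ and $\|X\|_2\le 1/\sqrt{D(D-1)}$. Scaling Step 1 alone only gives radius $1/D$, which is weaker. To reach $1/\sqrt{D(D-1)}$, I would exploit tracelessness. A traceless Hermitian $X$ with $\|X\|_2=r$ satisfies $\lambda_{\min}(X)\ge -r\sqrt{(D-1)/D}$, so the stated radius is exactly the insphere of the whole state space. The goal is therefore to show that this insphere is already separable. I would redo the positive-map estimate of Step 1 restricted to traceless perturbations: decompose $\Lambda$ into its trace-preserving part plus a multiple of the depolarizing map, and observe that the depolarizing part annihilates $X$. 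That leaves only the component of $\Lambda$ orthogonal to $I$, which yields the factor $\sqrt{D/(D-1)}$ improvement.

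\textbf{Main obstacle.} I expect the main difficulty to be the sharp positive-map inequality in Step 1 and its traceless refinement in Step 2. In particular, controlling $\|(\id\otimes\Lambda)(\Delta)\|_\infty$ by $\|\Delta\|_2$ times the correct normalization of $\Lambda$ is delicate, because positive maps are not completely positive and so one cannot simply argue via a Choi operator. If the general argument stalls, I would first verify the claim for $a=b=2$ using Pauli-basis coordinates and the known two-qubit separability characterization, and then try to lift it to general dimensions.
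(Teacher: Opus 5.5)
The paper gives no proof of this statement; it is quoted from \cite[Corollary 3]{gurvits2002largest} and used only in the appendix, so your first sentence is exactly what the paper does. Your reproof sketch, however, has two concrete errors. First, the premise of Step~2 is false. Step~1 plus scaling already gives the sharp radius if you scale by $D-1$ instead of $D$. With $\rho=I/D+X$ and $\Tr X=0$, one has $(D-1)\rho-I=-I/D+(D-1)X$. The two terms are Hilbert--Schmidt orthogonal, so $\|(D-1)\rho-I\|_2^2=1/D+(D-1)^2\|X\|_2^2\le 1$ exactly when $\|X\|_2\le 1/\sqrt{D(D-1)}$. Since the separable operators form a cone, the corollary follows from the radius-one statement with no further work. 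The ``traceless refinement'' you propose instead would not work anyway. A depolarizing-type map $Y\mapsto\Tr[Y]\,C$ acting on the second factor sends $X$ to $\Tr_B[X]\otimes C$, and $\Tr X=0$ does not force $\Tr_B X=0$ (take $X=H\otimes I_b$ with $H\neq0$ traceless).

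Second, Step~1 carries all the real content and is not established; the route you sketch cannot reach radius one. Extending a rank-one estimate over an operator Schmidt decomposition $\Delta=\sum_k s_kA_k\otimes B_k$ by the triangle inequality bounds $\|(\id\otimes\Lambda)(\Delta)\|_\infty$ by $\sum_k s_k\|A_k\|_\infty\|B_k\|_\infty$. This is an $\ell_1$-type quantity that exceeds $1$ on the unit Hilbert--Schmidt ball. For example, take $\Delta=s_1\proj{0}\otimes\proj{0}+s_2\proj{1}\otimes\proj{1}$ with $s_1,s_2>0$, $s_1\neq s_2$, $s_1^2+s_2^2=1$. Its Schmidt decomposition is essentially unique and the bound is $s_1+s_2>1$. (Also, $(\id\otimes\Lambda)(I)=I\otimes\Lambda(I)$ is not a scalar, so the needed normalization is the unital one, $\Lambda(I)^{-1/2}\Lambda(\cdot)\Lambda(I)^{-1/2}$, not a trace normalization.)

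The missing idea is to dualize. $I+\Delta$ is separable iff $\Tr[W(I+\Delta)]\ge 0$ for every block-positive $W$, so by Cauchy--Schwarz it suffices to show $\|W\|_2\le\Tr W$. This follows from the same 2-design identity used in the proof of \Cref{thm: main}. The operators $W_x=(\langle x|\otimes I)W(|x\rangle\otimes I)$ are positive semidefinite by block positivity. Average $\Tr[W_x^2]\le(\Tr W_x)^2$ over Haar-random unit $x\in\bC^a$, using $\int(\proj{x})^{\otimes 2}\,\mathbf{d}x=(I+F)/(a(a+1))$. This gives $\Tr[W^2]+\Tr[W_B^2]\le(\Tr W)^2+\Tr[W_A^2]$, where $W_A=\Tr_B W$ and $W_B=\Tr_A W$. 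The same argument on the other factor gives this inequality with $W_A,W_B$ interchanged. Adding the two yields $\Tr[W^2]\le(\Tr W)^2$, i.e.\ $\|W\|_2\le\Tr W$, since $\Tr W\ge 0$.
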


We first translate the result above to channels via the Choi isomorphism.  
\begin{proposition}
Let $\cN\colon \cL(A)\to \cL(B)$ be a quantum channel. If $$\|J(\cN)-J(\Delta)\|_2\le1/\sqrt{ab(ab-1)},$$ then $\cN$ is entanglement-breaking. In particular, the stronger condition $$\|\cN-\Delta\|_\diamond\le1/\sqrt{ab(ab-1)}$$ also implies that $\cN$ is entanglement-breaking.
\end{proposition}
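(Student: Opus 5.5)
The first claim follows from \Cref{thm: gurvits-barnum} together with the Choi criterion for entanglement-breaking. We apply \Cref{thm: gurvits-barnum} to $\rho=J(\cN)_{AB}$, a state on $A\otimes B$ of total dimension $D=ab$. Since $J(\Delta)=I_{AB}/(ab)=I_D/D$, the hypothesis $\|J(\cN)-J(\Delta)\|_2\le 1/\sqrt{ab(ab-1)}$ is exactly \eqref{eq: GB-radius}, so $J(\cN)$ is separable across $A:B$. By the Horodecki characterization~\cite{horodecki2003entanglementbreaking}, a channel is entanglement-breaking if and only if its Choi state is separable. Hence $\cN$ is entanglement-breaking.

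For the second claim, we show that the diamond-norm condition implies the Hilbert--Schmidt condition. The plan is the chain
\[
\|J(\cN)-J(\Delta)\|_2\le\|J(\cN)-J(\Delta)\|_1=\|(\id_A\otimes(\cN-\Delta))(\Phi_{AA'})\|_1\le\|\cN-\Delta\|_\diamond .
\]
The first inequality holds because the Schatten $2$-norm is bounded by the trace norm, since $\sum_i s_i^2\le(\sum_i s_i)^2$ for the singular values $s_i$. The last inequality follows from the definition of the diamond norm, evaluated at $X_{RA}=\Phi$ with $R\cong A$ and $\|\Phi\|_1=1$. This choice is admissible because $\dim R=\dim A$.

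Consequently $\|\cN-\Delta\|_\diamond\le1/\sqrt{ab(ab-1)}$ implies $\|J(\cN)-J(\Delta)\|_2\le 1/\sqrt{ab(ab-1)}$, and the first part applies.

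The only step needing care is checking that the normalized Choi state is the object to which the Gurvits--Barnum bound applies, with $I_D/D$ matching $J(\Delta)$ exactly. Both are normalized, so the normalization is consistent. The remaining steps are routine norm comparisons.
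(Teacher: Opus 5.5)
Your proof is correct and follows the same route as the paper: you apply the Gurvits--Barnum ball to the normalized Choi state, using $J(\Delta)=I_{AB}/(ab)$, invoke the Choi-state criterion for entanglement-breaking, and reduce the diamond-norm condition to the Hilbert--Schmidt one via $\|J(\cN)-J(\Delta)\|_2\le\|J(\cN)-J(\Delta)\|_1\le\|\cN-\Delta\|_\diamond$. You justify the two norm inequalities in a bit more detail than the paper does, but the argument is the same.
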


\begin{proof}
The first hypothesis says exactly that the normalized Choi state $J(\cN)$ lies in the Gurvits-Barnum separable ball around $I_{AB}/(ab)$. Thus $J(\cN)$ is separable across $A:B$. Since a channel is entanglement-breaking if and only if its normalized Choi state is separable, $\cN$ is entanglement-breaking.

For the diamond-norm statement, use the inequalities $$\|J(\cN)-J(\Delta)\|_2\le\|J(\cN)-J(\Delta)\|_1\le\|\cN-\Delta\|_\diamond,$$
which prove the claim.
\end{proof}

The next result provides a bound on the optimal privacy parameter for channels that are close in diamond norm to the completely depolarizing channel.

\begin{proposition}
Suppose $\|\cN-\Delta\|_\diamond\le\eta<1/b$. Then $$\varepsilon^*(\cN)\le\log((1+b\eta)/(1-b\eta)).$$ Thus, if $$\|\cN-\Delta\|_\diamond\le r_{\mathrm{GB}}(a,b):=1/\sqrt{ab(ab-1)},$$ then $\cN$ is entanglement-breaking and $$\varepsilon^*(\cN)\le\log((1+b r_{\mathrm{GB}}(a,b))/(1-b r_{\mathrm{GB}}(a,b))).$$
\end{proposition}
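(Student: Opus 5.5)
Every output $\cN(\rho)$ is within $\eta/2$ of $I_B/b$ in trace distance, so its eigenvalues are sandwiched in $[(1-b\eta)/b,\,(1+b\eta)/b]$; \Cref{lem:spectral-epsstar} (or \Cref{lem: qldp-eq}) then turns this sandwich into a likelihood-ratio bound. Concretely, fix a state $\rho$ on $A$. Since $\Delta(\rho)=I_B/b$, we have
$$\|\cN(\rho)-I_B/b\|_1=\|(\cN-\Delta)(\rho)\|_1\le\|\cN-\Delta\|_\diamond\le\eta,$$
using that the diamond norm dominates the induced trace norm (take a trivial reference system). Because $\cN(\rho)-I_B/b$ is traceless and Hermitian, its operator norm is at most half its trace norm: the positive and negative parts each have trace $\tfrac12\|\cdot\|_1$. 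Hence
$$\Bigl(\tfrac1b-\tfrac{\eta}{2}\Bigr)I_B\le\cN(\rho)\le\Bigl(\tfrac1b+\tfrac{\eta}{2}\Bigr)I_B .$$
Even the cruder bound $\|X\|_\infty\le\|X\|_1\le\eta$ gives $(1-b\eta)I_B/b\le\cN(\rho)\le(1+b\eta)I_B/b$, which is exactly what the stated constant requires.

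Set $m=(1-b\eta)/b$ and $u=(1+b\eta)/b$. The hypothesis $\eta<1/b$ gives $m>0$. For arbitrary states $\rho,\sigma$ and any $0\le M\le I_B$ we get
$$\Tr[M\cN(\rho)]\le u\Tr[M],\qquad \Tr[M\cN(\sigma)]\ge m\Tr[M].$$
Therefore $\Tr[M\cN(\rho)]\le (u/m)\Tr[M\cN(\sigma)]$, which is the upper-bound half of the proof of \Cref{lem:spectral-epsstar}. We need only this half, since the statement claims an inequality for $\varepsilon^*$, not equality. This yields $\varepsilon^*(\cN)\le\log((1+b\eta)/(1-b\eta))$. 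The same conclusion also follows from the operator form $\cN(\rho)\le (u/m)\cN(\sigma)$ together with \Cref{lem: qldp-eq}.

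For the second claim, put $\eta=r_{\mathrm{GB}}(a,b)$. The preceding proposition already gives that $\cN$ is entanglement-breaking. It remains to check $r_{\mathrm{GB}}<1/b$, i.e.\ $\sqrt{ab(ab-1)}>b$, which holds whenever $a\ge2$, since then $a(ab-1)>b$. (If $a=1$ the channel is trivially constant, so $\varepsilon^*=0$ anyway.) The privacy bound then follows from the first part.

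There is no serious obstacle. The one point requiring care is the step from the diamond-norm distance to a spectral sandwich valid for every input state. This uses the fact that a trace-norm bound controls each eigenvalue deviation, and it is the only place where the dimension factor $b$ enters. The positivity condition $\eta<1/b$ is exactly what keeps the lower spectral bound $m$ strictly positive.
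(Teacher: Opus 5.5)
Your proof is correct and follows the paper's argument. Like the paper, you bound $\|\cN(\rho)-I_B/b\|_\infty$ by the trace-norm distance, which the diamond norm controls, to get a spectral sandwich for every output, and then take the ratio of the two bounds, just as the paper does via the operator form in \Cref{lem: qldp-eq}. Your check that $r_{\mathrm{GB}}(a,b)<1/b$ for $a\ge 2$, which is needed to apply the first part, and your sharper $\eta/2$ remark are both valid refinements that the paper leaves implicit.
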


\begin{proof}
For every input state $\rho$, $$\|\cN(\rho)-I_B/b\|_\infty\le\|\cN(\rho)-I_B/b\|_1\le\eta.$$ Thus $(1/b-\eta)I_B\le \cN(\rho)\le(1/b+\eta)I_B$ for every $\rho$. Hence, for all states $\rho,\sigma$, $\cN(\rho)\le((1/b+\eta)/(1/b-\eta))\cN(\sigma)$. By the operator form of QLDP proved in \Cref{lem: qldp-eq}, this is the stated privacy bound. The final statement combines this privacy bound with the previous proposition.
\end{proof}

The next result provides a bound on the optimal privacy parameter for a composition of any channel with a depolarizing channel.

\begin{proposition}
Let $\cN\colon \cL(A)\to \cL(B)$ be any quantum channel, and define $\cN_p=(1-p)\Delta+p\cN$, where $0\le p\le1$. Then $\cN_p$ is entanglement-breaking whenever $p\le1/(ab-1)$. Moreover, $$\varepsilon^*(\cN_p)\le\log(1+pb/(1-p)).$$ At the Gurvits-Barnum endpoint $p=1/(ab-1)$, assuming $ab>2$, this gives $$\varepsilon^*(\cN_p)\le\log((ab+b-2)/(ab-2)).$$
\end{proposition}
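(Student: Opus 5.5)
Entanglement-breaking will come from the Choi-state criterion via the Gurvits--Barnum ball (the preceding proposition), and the privacy bound from a two-sided operator sandwich on the outputs combined with \Cref{lem: qldp-eq}.

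\emph{Step 1 (Choi state).} By linearity of the Choi map,
$$J(\cN_p)=(1-p)\frac{I_{AB}}{ab}+pJ(\cN),$$
so $J(\cN_p)-J(\Delta)=p\bigl(J(\cN)-I_{AB}/(ab)\bigr)$. I would bound the Hilbert--Schmidt norm of the bracket. The bracket is the difference of two states with $\Tr[\rho^2]\le1$, so
$$\|J(\cN)-I/D\|_2^2=\Tr[J(\cN)^2]-1/D\le 1-1/D=(D-1)/D,$$
where $D=ab$. Hence
$$\|J(\cN_p)-J(\Delta)\|_2\le p\sqrt{(D-1)/D}.$$
This is at most $1/\sqrt{D(D-1)}$ exactly when $p\le 1/(D-1)$. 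The first part of the preceding proposition (equivalently \Cref{thm: gurvits-barnum}) then shows that $J(\cN_p)$ is separable, so $\cN_p$ is entanglement-breaking.

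\emph{Step 2 (privacy bound).} For any state $\rho$,
$$\cN_p(\rho)=(1-p)I_B/b+p\,\cN(\rho).$$
Since $0\le\cN(\rho)\le I_B$, this gives the sandwich
$$\frac{1-p}{b}I_B\le\cN_p(\rho)\le\Bigl(\frac{1-p}{b}+p\Bigr)I_B.$$
For arbitrary states $\rho,\sigma$ it follows that
$$\cN_p(\rho)\le\frac{(1-p)/b+p}{(1-p)/b}\,\cN_p(\sigma)=\Bigl(1+\frac{pb}{1-p}\Bigr)\cN_p(\sigma).$$
By \Cref{lem: qldp-eq} this is the claimed QLDP bound for $p<1$. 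The case $p=1$ is trivial, since the right-hand side is then $+\infty$.

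\emph{Step 3 (endpoint).} Substitute $p=1/(ab-1)$, so $1-p=(ab-2)/(ab-1)$. Then
$$\frac{pb}{1-p}=\frac{b}{ab-2},$$
and $1+b/(ab-2)=(ab+b-2)/(ab-2)$. The assumption $ab>2$ guarantees $p<1$ and a positive denominator.

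I expect no serious obstacle. The only points needing care are the purity bound $\Tr[J^2]\le 1$, used to control the 2-norm uniformly over all channels $\cN$, and correctly identifying the Choi state of the depolarizing part.
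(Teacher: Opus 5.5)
Your proposal is correct and follows the paper's proof essentially step by step. It uses the same Choi decomposition $J(\cN_p)=(1-p)I_{AB}/(ab)+pJ(\cN)$ with the uniform bound $\|\rho-I/D\|_2\le\sqrt{1-1/D}$ fed into the Gurvits--Barnum ball, and the same sandwich $\frac{1-p}{b}I_B\le\cN_p(\rho)\le\bigl(\frac{1-p}{b}+p\bigr)I_B$ for the privacy bound, adding only the purity justification $\Tr[J^2]\le1$ and the $p=1$ remark that the paper leaves implicit.
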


\begin{proof}
The Choi state of $\cN_p$ is $J(\cN_p)=(1-p)I_{AB}/(ab)+pJ(\cN)$. For every state $\rho$ on a $D$-dimensional Hilbert space, $$\|\rho-I_D/D\|_2\le\sqrt{1-1/D}.$$ Therefore $$\|J(\cN_p)-I_{AB}/(ab)\|_2\le p\sqrt{1-1/(ab)}.$$ The Gurvits-Barnum condition holds whenever $$p\sqrt{1-1/(ab)}\le1/\sqrt{ab(ab-1)},$$ which simplifies to $p\le1/(ab-1)$. Thus $\cN_p$ is entanglement-breaking under this condition.

For privacy, for every input state $\rho$, $\cN_p(\rho)=(1-p)I_B/b+p\cN(\rho)$. Since $0\le \cN(\rho)\le I_B$, we have $(1-p)I_B/b\le \cN_p(\rho)\le((1-p)/b+p)I_B$. Hence $$\varepsilon^*(\cN_p)\le\log(((1-p)/b+p)/((1-p)/b))=\log(1+pb/(1-p)).$$ Substituting $p=1/(ab-1)$ gives the endpoint bound.
\end{proof}

Finally, the next proposition gives a geometric perspective of the constant in \Cref{thm: main}. The optimal universal high-privacy threshold is the QLDP value at which the transpose-depolarizing Choi state parameter crosses the Gurvits-Barnum separable ball.

\begin{proposition}
Let $\cT_t(\rho)=t\rho^T+(1-t)\Tr[\rho]I/d$ be the transpose-depolarizing channel from \Cref{thm:sharp} with Choi state $J(\cT_t)$. Then $$\|J(\cT_t)-I_{d^2}/d^2\|_2=|t|\sqrt{1-1/d^2}.$$ At the parameter value $t_0=-1/(d^2-1)$ at which $\cT_t$ becomes entanglement-breaking, this distance is exactly $1/(d\sqrt{d^2-1})$, which is the Gurvits-Barnum radius for a $d\times d$ Choi state, that is, the expression on the right side of \Cref{eq: GB-radius}. Moreover, $$\varepsilon^*(\cT_{t_0})=\log(d/(d-1)).$$
\end{proposition}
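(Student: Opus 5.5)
The plan is to verify the three claims of the proposition by direct computation from the normalized Choi state recorded in the proof of \Cref{thm:sharp}, namely $J(\cT_t)=tF/d+(1-t)I_{d^2}/d^2$.

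\textbf{Hilbert--Schmidt distance.} First write
\[
J(\cT_t)-I_{d^2}/d^2 = t\bigl(F/d - I_{d^2}/d^2\bigr),
\]
so the distance equals $|t|\,\|F/d-I/d^2\|_2$. I compute this norm by expanding the square and using $F^2=I$, $\Tr[F]=d$ and $\Tr[I]=d^2$:
\[
\|F/d-I/d^2\|_2^2=\frac{\Tr[I]}{d^2}-\frac{2\Tr[F]}{d^3}+\frac{\Tr[I]}{d^4}=1-\frac{2}{d^2}+\frac{1}{d^2}=1-\frac{1}{d^2}.
\]
This gives $\|J(\cT_t)-I_{d^2}/d^2\|_2=|t|\sqrt{1-1/d^2}$, as claimed.

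\textbf{Value at the threshold.} Substituting $t_0=-1/(d^2-1)$ gives
\[
\frac{1}{d^2-1}\cdot\frac{\sqrt{d^2-1}}{d}=\frac{1}{d\sqrt{d^2-1}}.
\]
For comparison, with total dimension $D=d^2$ the Gurvits--Barnum radius in \eqref{eq: GB-radius} is $1/\sqrt{D(D-1)}=1/\sqrt{d^2(d^2-1)}$. This is the same quantity, so the distance at $t_0$ is exactly the Gurvits--Barnum radius.

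\textbf{Privacy parameter.} The proof of \Cref{thm:sharp} already computed, via \Cref{lem:spectral-epsstar}, that $\epsstar(\cT_t)=\log\frac{1-t}{1+(d-1)t}$ for $-1/(d-1)<t<0$. The point $t_0$ lies in this range because $1/(d^2-1)<1/(d-1)$. Evaluating the ratio at $t_0$ uses $u_{t_0}=d/(d^2-1)$ and $m_{t_0}=(d-1)/(d^2-1)$, both recorded in that proof. This gives $\epsstar(\cT_{t_0})=\log(d/(d-1))$.

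No step is a real obstacle. The only points needing care are the swap-operator trace identities and checking that $t_0$ lies in the open interval where the lemma's hypotheses hold, so that $m_{t_0}>0$ and the outputs are full rank.
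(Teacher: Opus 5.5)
Your proposal is correct and follows the paper's proof essentially step for step. You use the same factorization $J(\cT_t)-I_{d^2}/d^2=t(F/d-I_{d^2}/d^2)$, the same swap-operator trace expansion, and the same comparison with the Gurvits--Barnum radius at $D=d^2$, and you substitute $t_0$ into the formula $\epsstar(\cT_t)=\log\frac{1-t}{1+(d-1)t}$ from \Cref{thm:sharp}. Your explicit check that $t_0\in(-1/(d-1),0)$, so that \Cref{lem:spectral-epsstar} applies, is a small detail the paper leaves implicit.
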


\begin{proof}
Since $J(\cT_t)-I_{d^2}/d^2=t(F/d-I_{d^2}/d^2)$, and since $F^2=I_{d^2}$ and $\Tr F=d$, 
$$
\left\|\frac{F}{d}-\frac{I_{d^2}}{d^2}\right\|_2^2=\frac{\Tr F^2}{d^2} -\frac{2\Tr F}{d^3} + \frac{\Tr I_{d^2}}{d^4} = 1-\frac1{d^2}.
$$
This proves the distance formula. At $t_0=-1/(d^2-1)$, the distance becomes $$(1/(d^2-1))\sqrt{1-1/d^2}=1/(d\sqrt{d^2-1}).$$ The Gurvits-Barnum radius for total dimension $d^2$ is $$1/\sqrt{d^2(d^2-1)}=1/(d\sqrt{d^2-1}).$$

Finally, \Cref{thm:sharp} gives $\varepsilon^*(\cT_t)=\log((1-t)/(1+(d-1)t))$ on the negative branch. Substituting $t=t_0$ gives $\varepsilon^*(\cT_{t_0})=\log(d/(d-1))$.
\end{proof}

\end{document}